\newtheorem{theorem}{Theorem} 
\newtheorem{conjecture}[theorem]{Conjecture}
\newtheorem{lemma}[theorem]{Lemma}
\newtheorem{proposition}[theorem]{Proposition}
\title[Equitable Retirement Income Tontines]{Equitable Retirement Income Tontines: \\ Mixing Cohorts Without Discriminating}
\author{M.A. Milevsky and T.S. Salisbury}
\thanks{Milevsky is an Associate Professor of Finance at the Schulich School of Business, York University, and Executive Director of the IFID Centre. Salisbury is a Professor in the Department of Mathematics and Statistics at York University. The authors acknowledge funding from the IFID Centre (Milevsky) and from NSERC (Salisbury) as well as helpful comments from two ASTIN reviewers, the editor (Daniel Bauer) as well as seminar participants at University of New South Wales, Macquarie University, the CFA Society of Sydney and in particular Anthony Asher, David Blake, Steven Haberman, Geoff Kingston, Sachi Purcal, John Piggott and Michael Sherris. The contact author (Milevsky) can be reached at: milevsky@yorku.ca} 
\date{Final Accepted Version: April 14, 2016}
\begin{document} 
\begin{abstract}

There is growing interest in the design of pension annuities that insure against {\em idiosyncratic} longevity risk while pooling and sharing {\em systematic} risk. This is partially motivated by the desire to reduce capital and reserve requirements while retaining the value of mortality credits; see for example Piggott, Valdez and Detzel (2005) or Donnelly, Guillen and Nielsen (2014).  In this paper we generalize the {\em natural retirement income tontine} introduced by Milevsky and Salisbury (2015) by combining heterogeneous cohorts into one pool. We engineer this scheme by allocating tontine shares at either a premium or a discount to par based on both the age of the investor and the amount they invest. For example, a 55 year-old allocating $\$10,000$ to the tontine might be told to pay $\$$200 per share and receive 50 shares, while a 75 year-old allocating $\$8,000$ might pay $\$$40 per share and receive 200 shares. They would all be mixed together into the same tontine pool and each tontine share would have equal income rights. The current paper addresses existence and uniqueness issues and discusses the conditions under which this scheme can be constructed {\em equitably} -- which is distinct from {\em fairly} -- even though it isn't {\em optimal} for any cohort. As such, this also gives us the opportunity to compare and contrast various pooling schemes that have been proposed in the literature and to differentiate between arrangements that are socially equitable, vs. actuarially fair vs. economically optimal.

\end{abstract}

\maketitle

\newpage

\section{Introduction}
\label{sec:intro}

The tontine annuity -- which was first promoted as a retirement income vehicle by Lorenzo di Tonti in the year 1653 -- hasn't benefited from the best publicity over the last three and a half centuries. Although at first tontines were used by British and French governments to finance their wars (against each other), conventional fixed interest bonds ended up superseding them as the preferred method of deficit financing. Private sector insurance companies in the 18th century offered tontine-like products, but they too were superseded by more familiar guaranteed life annuities and pensions. In fact, by the early 19th century regulators in the U.S. and the U.K. banned (a derivative product called) tontine insurance, although there is some debate over whether the ban actually applies to the tontines envisioned by Tonti. Legalities aside, in the words of the well-known financial writer Edward Chancellor, tontines are {\em ``one of the most discredited financial instruments in history.''} We refer the interested reader to the book by Milevsky (2015) in which a slice of the tontine's colorful history is addressed. In this article our focus (and contribution) is actuarial as opposed to political or historical.

In its purest financial form a tontine annuity can be viewed as a perpetual (i.e. infinite maturity) bond that is purchased from an issuer by a group of investors who agree to share periodic coupons only amongst survivors.  As investors die and leave the tontine pool the coupons or cash flows earned by those who avoid death increase (super) exponentially over time. In theory the last remaining survivor receives all of the coupons until he or she finally dies and the issuer's obligations to make payments are terminated. One can alternatively think of the tontine annuity as consisting of a portfolio of zero coupon bonds (ZCBs) with staggered maturities or face values in which the final ZCB matures at the maximum possible lifespan of the investors in the pool, e.g. age 125. Cash flows from maturing ZCBs are distributed equally among survivors. From this perspective the cash-flow pattern can be fined-tuned to any desired profile as long as its present value is equal to the amount invested by the group.  The tontine pool retains longevity risk in the sense that if people live longer than expected their payments are reduced relative to what they might have expected at time zero. Like other pooled schemes we will discuss -- such as those described in Piggott et. al. (2005), Stamos (2008), or Donnelly et. al. (2014) -- with a {\em retirement income tontine} there is no entity guaranteeing fixed payments for life, thus eliminating capital requirements. We use the term {\em retirement income tontine} to differentiate our scheme from a ``winner take all'' bet in which the payoff is deferred to the last survivor and to remind the reader of the pension-like structure.

\subsection{Problems with Tontines}
\label{subsec:problems}

Over the past few centuries there has been quite a bit of popular and scholarly criticism leveled against tontine schemes. Overall these concerns can be placed into three broad categories. 

The first concern is that tontines themselves are amoral because one immediately benefits from someone's death. A more refined version of this concern is that they create an incentive for fraud, murder and other criminal activity. These critics contend that as the size of the tontine pool shrinks the surviving members are incentivized to kill each other to gain a bigger share of the tontine pool. This colorful perception of the tontine permeates literary fiction and is the subject of many novels, but has no basis in reality. Most historical tontine schemes capped or limited payouts once a small fraction (say 5 to 10 members) remained in the pool. Despite all the fictional novels there simply is no documented evidence that the last few survivors of a national tontine ever murdered each other. In fact, with hundreds of people in the tontine pool the economic benefit from nefariously reducing the pool size is minimal. The ethical concern that investors would benefit directly from death can be dismissed outright within the actuarial community since that is the foundation of all pension and annuity pricing. As for the concern with fraud, this indeed was a problem in the 17th and 18th century when documenting life and death was unreliable but can also be dismissed in the 21st century with modern record-keeping systems. 

The second concern with tontine annuities relates to economic optimality and cash flow patterns. In Lorenzo's tontine scheme the cash paid to the group remained relatively constant (i.e. the numerator) but the number of survivors (i.e. the denominator) declined more-than exponentially fast over time, resulting in a rapidly increasing payout to surviving members. This explosive profile of income is at odds with the economic desire for (stable) consumption smoothing. It is assumed that older retired investors would want a stable or perhaps even declining real cash-flow over time, notwithstanding concerns about health-care expenses and inflation for the elderly. According to these critics the tontine annuity isn't an optimal economic contract. But this concern can also be remedied with proper product design. There is no reason why the cash-flows to the group should be structured to remain constant over time. As mentioned above, the ZCBs payments to the pool could decline (faster than) exponentially at the same rate as (expected) mortality. In fact, this is the essence of Milevsky and Salisbury (2015). To sum up, we believe the second objection can be easily overcome.\footnote{This argument was recently made by Professor William Sharpe in a presentation to the French Finance Association, quoted in Milevsky (2015, pg. 164).}

The third concern is a more-subtle one and has to do with the pooling of cohorts and the age profile of the tontine. That is the core focus on this paper. If one allows anyone regardless of age to participate equally in the same tontine annuity pool -- which we do not -- there would be an immediate transfer of wealth from the members who are expected to die early (i.e. the old) to those who are expected to live the longest (i.e. the young). Historical tontines -- such as the one first issued by the English government in the late 17th century -- discriminated against the old in favor of the young. Thus, for example, in the earliest English tontine schemes the nominees on whose life the tontine was contingent ranged in age from a few months to over the age of 50. In equilibrium everyone should nominate the healthiest possible age (females age 10, approximately), but that leads to design problems when the nominee and annuitant are not the same person. If indeed one requires homogenous mortality pools to run a non-discriminating tontine scheme then this limits the possible size of the pool and the efficacy of large number diversification. The 18th century tontine schemes in which investors were placed into small tailored classes -- with different payouts based on age -- suffered from reduced pooling and risk diversification. A modern day pension fund trying to implement a tontine payout structure with a predictable cash flow would face the same concern unless it had a very large pool of willing retirees with identical ages and health profiles. 

In some sense, we believe this is the most serious criticism against resurrecting retirement income tontines in the context of modern pension schemes. One would require hundreds of people of exactly the same age retiring on exactly the same day, to reduce the variability of payouts. In fact, this is one reason why authors such as Piggot, Valdez and Detzel (2005) or Donnelly, Guillen and Nielsen (2014) have proposed pooled annuitization schemes or overlays that allow for mixing of different cohorts over multiple generations. Note that we will not wade into a debate over which among the many pooling schemes is {\em better.} In fact there is quite a bit of overlap between them as we will soon demonstrate. Moreover there will be a tradeoff. A scheme that squeezes out the highest possible utility for the group may also be more complex to analyze and harder to explain. 

In sum, we choose to analyze tontines which offer a design that provides ``good'' utility while remaining sufficiently simple that we can establish a range of qualitative results. We understand and acknowledge that other designs have their own appeal and role.

\subsection{Making heterogeneous tontines equitable}
\label{subsec:resolution}

As stated above, it is this third criticism leveled against (retirement income) tontines that we address in the current paper. Our remedy to this concern is to allow cohorts of different ages (and mortality) to mix in the same pool by allocating different participation rates or shares based on their age at the time of purchase. For example, a 55 year-old allocating $\$10,000$ to the tontine might be told to pay \$200 per share and receive 50 shares, while a 75 year-old allocating $\$8,000$ might pay $\$$40 per share and receive 200 shares. They would all be mixed together into the same tontine pool and each tontine share would have equal income rights.\footnote{On a historical note, this proposal to ``fix'' tontines was actually made almost 200 years ago by Mr. Charles Compton (1833), who was the Accountant General of the Royal Mail in the UK. He wrote:  {\em ``His Majesty's Government should create a Tontine Stock bearing interest at a certain rate per annum and to permit persons of a certain age to purchase such stock at par, those younger or older to purchase the same stock above or below par according to their several ages.''} He argued that this was preferable to forming the contributors into classes which ends-up creating very small groups with few benefits from pooling. He claimed that: {\em ``Younger purchasers would give more money for \pounds 100 stock than the elder and would give up part of their income for the benefit of elder members, who in turn would bequeath their annuities to the younger as compensation."}  Compton (1833) goes on to list a table of values mapping ages into share prices, which is quite similar (in spirit, at least) to our numbers, which we present later on in the paper.}

Now, it might seem rather trivial (actuarially) to allocate shares in the tontine based on the age of the investor and the size of their investment. After all, with an immediate annuity, \$1 of lifetime income will cost $a_{45}$ for a 45-year old and $a_{75}$ for a 75 year-old. The relative prices of mortality-contingent claims are well understood in the actuarial literature. But what may not be obvious is that in fact there are situations (i.e. counter examples) in which this cannot be done in a fair (or even equitable) manner, especially when the groups are small. In other words, there are cases in which no mapping or share price will allow groups to be mixed without discrimination. Our objective is to understand when this is (or is not) possible. The need for large pools to diversify risk is linked to the issues addressed in this paper and is a question that has recently been highlighted by Donnelly (2015) as well. We return to this later.

As far as terminology is concerned, in this paper we are careful to distinguish between a scheme that is {\em fair} and a scheme that is {\em equitable}, which is a somewhat weaker requirement. A retirement income tontine scheme in which there is a possibility of everyone in the pool dying before the maximum age and thus leaving left-overs can never be made {\em fair}, in the sense of Donnelly (2015) unless it incorporates some form of payment to estates. By the word {\em fair} we mean that the expected present value of income will always be less than the amount contributed or invested into the tontine. However, a heterogeneous tontine scheme can often (though not always) be made {\em equitable} by ensuring that the present value of income (although less than the amount contributed) is the same for all participants in the scheme regardless of age. This scheme will not discriminate against any one cohort although it won't be fair. All of this will be addressed in detail including an analysis of scenarios in which equity is impossible to achieve.

To recap then, in this paper we investigate how to construct a multi-age tontine scheme  and determine the proper share prices to charge participants so that it is equitable and doesn't discriminate against any age or any group. The tontine we propose is a closed pool that does not allow anyone to enter (or obviously exit) after the initial set-up. This is one further place we differ from the designs of Piggott et. al. (2005), Donnelly et. al. (2014), or (in the tontine context) Sabin (2010). That is, we advocate closing the group to newcomers, but allow multiple ages and contribution levels within the closed pool. Again, we refrain from arguing that this is better or worse than any other design. That said, our model requires and assumes little (if any) {\em actuarial discretion} as time evolves. The rules are set at time zero and the cash-distribution algorithm is crystal clear. We believe this design has merit.

\subsection{Outline}
\label{subsec:outline}

The remainder of this paper is organized as follows. In Section \ref{sec:background} we summarize the results of  Milevsky and Salisbury (2015) and the economic optimality of tontines in the context of a single homogeneous cohort of subscribers. Indeed, there are many possible payout functions $d(t)$ that one can use to construct a tontine scheme -- historically the $d(t)$ curve was constant, for example 8\% per year  -- and the optimal function depends (at a minimum) on the representative investor's coefficient of risk aversion. A particularly natural payout function arises as the $d(t)$ that optimizes logarithmic utility preferences. In this section we also offer a brief comparison to other pooling schemes. Section \ref{sec:mixing} moves from a review of single cohorts to the introduction of multiple cohorts, which is the contribution of the current paper. It describes in precise terms what is meant by an equitable share price for all participants in a tontine scheme given a particular payout function $d(t)$. In that section we offer a more precise definition of the notion of fairness and how it differs from equitable. Section \ref{sec:utility} returns to the matter of economic optimality. It is typically impossible to locate a payout function $d(t)$ that is optimal for all cohorts, as was possible  in the case of a single cohort, even if all participants have the same level of risk aversion. Indeed, the best that one can hope for when mixing cohorts is an equitable scheme and not a uniformly optimal one. In Section \ref{sec:natural} we propose that a good selection from among all the possible equitable schemes is one in which the payout function $d(t)$ is proportional to the expected number of shares outstanding at any point in the future. Alas, we can't yet prove uniqueness for this scheme and leave this as a conjecture. We do however discuss welfare gains and losses from the scheme and provide some numerical examples. Section \ref{sec:others} makes comparisons with other product designs that exist in the literature and discusses conditions under which they overlap. Section \ref{sec:conclusion} concludes and offers some suggestions and avenues for further research. Proofs appear in the appendix (Section \ref{sec:appendix}). 

\section{Annuities vs. Optimal Tontine Payout Functions}
\label{sec:background}

In this section we briefly review the {\em optimal} tontine scheme proposed in Milevsky and Salisbury (2015). We assume an objective survival function ${}_tp_x$, for an individual aged $x$ to survive $t$ years. One purpose of the tontine structure is to insulate the issuer from the risk of a stochastic (or simply mis-specified) survival function, but in this paper we assume ${}_tp_x$ is given and applies to all individuals. We intend to address the stochastic case in subsequent work. We assume that the tontine pays out continuously as opposed to quarterly or monthly. For ease of exposition we assume a constant risk-free interest rate $r$, though it would be easy to incorporate a term structure. What makes the tontine a simple and inexpensive product to build and manage is that the payouts are known from the beginning and can be engineered (without active management) by a simple portfolio of ZCBs. 

\subsection{Optimal annuities}
\label{subsec:annuities}

The basic comparator for a tontine is an annuity in which annuitants each pay \$1 to the insurer initially and receive in return an income stream of $c(t)\,dt$ for life. The constraint on these annuities is that they are fairly priced, in other words that with a sufficiently large client base the initial payments invested at the risk-free rate will fund the called-for payments in perpetuity. This implies a constraint on the annuity payout function $c(t)$, namely that 
\begin{equation}
\label{annuityconstraint}
\int_0^\infty e^{-rt}{}_tp_x\, c(t)\,dt=1.
\end{equation}
Again, $c(t)$ is the payout rate per survivor. The payout rate per initial dollar invested is ${}_tp_x\,c(t)$. Letting $U(c)$ denote the instantaneous utility of consumption, a rational annuitant (with lifetime $\zeta$) having no bequest motive will choose a life annuity payout function for which $c(t)$ maximizes the discounted lifetime utility:
\begin{equation*}
E[\int_0^\zeta e^{-rt}U(c(t))\,dt]=\int_0^\infty e^{-rt}{}_tp_x\, U(c(t))\,dt
\end{equation*}
where $r$ is (also) the subjective discount rate (SDR), all subject to the constraint \eqref{annuityconstraint}. Provided $u$ is strictly concave, the following now follows from the Euler-Lagrange theorem.
\begin{theorem}[Theorem 1 of Milevsky and Salisbury (2015)] Optimized life annuities have constant $c(t)\equiv \frac{1}{a_x}$, where
$a_x=\int_0^\infty e^{-rt}{}_tp_x\,dt$.
\label{thm:annuity}
\end{theorem}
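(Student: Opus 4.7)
The plan is to attack the constrained optimization
\[
\max_{c(\cdot)}\int_0^\infty e^{-rt}\,{}_tp_x\,U(c(t))\,dt \quad\text{subject to}\quad \int_0^\infty e^{-rt}\,{}_tp_x\,c(t)\,dt=1
\]
by Lagrange multipliers. I would introduce a multiplier $\lambda\in\mathbb{R}$ and form the unconstrained Lagrangian
\[
\mathcal{L}(c,\lambda)=\int_0^\infty e^{-rt}\,{}_tp_x\bigl[U(c(t))-\lambda\,c(t)\bigr]\,dt+\lambda.
\]
Because the integrand depends on $c(t)$ but not on any derivative of $c$, the Euler--Lagrange equation collapses to the pointwise first-order condition $U'(c(t))=\lambda$ at every $t\ge 0$ with ${}_tp_x>0$.

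The second step exploits strict concavity of $U$: since $U'$ is then strictly decreasing, it is injective on its domain, and the equation $U'(c(t))=\lambda$ forces $c(t)$ to equal a single constant $c^\ast=(U')^{-1}(\lambda)$ for all admissible $t$. Substituting this constant into the fairness constraint \eqref{annuityconstraint} and pulling $c^\ast$ outside the integral gives $c^\ast\cdot a_x=1$, hence $c^\ast=1/a_x$. Thus the Euler--Lagrange analysis delivers precisely the claimed payout. The multiplier $\lambda$ itself is determined implicitly by $\lambda=U'(1/a_x)$, but that value is irrelevant for the conclusion.

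To upgrade ``critical point'' to ``global maximizer,'' I would invoke concavity of the whole program: the objective is strictly concave in $c$ (because $U$ is), and the constraint is affine. Any feasible $c$ satisfying the first-order condition for some $\lambda$ must therefore be the unique optimizer. The main obstacle I anticipate is technical rather than conceptual: the domain of integration is the infinite half-line, and one needs to specify a class of admissible payout functions (say, nonnegative measurable $c$ for which both integrals converge, with $U$ bounded above or ${}_tp_x$ having sufficiently thin tails) so that the perturbation argument justifying the pointwise Euler--Lagrange condition is rigorous and the value $(U')^{-1}(\lambda)$ lies in the interior of the consumption range. For the standard Inada-type utilities the authors have in mind this is routine, and the constant solution $c^\ast=1/a_x$ clearly satisfies every such admissibility requirement.
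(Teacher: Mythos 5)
Your argument is correct and follows essentially the same route the paper indicates: it invokes the Euler--Lagrange theorem under strict concavity of $U$, which collapses to the pointwise first-order condition $U'(c(t))=\lambda$, forcing $c$ constant and then $c^\ast=1/a_x$ from the budget constraint \eqref{annuityconstraint}. Your additional remarks on global optimality and admissibility are sensible refinements of the same standard argument, not a different method.
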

This result can be traced back to Yaari (1965) who showed that the optimal (retirement) consumption profile is constant (flat) and that 100\% of wealth is annuitized when there is no bequest motive, subjective discount rates are equal to interest rates and complete annuity markets (actuarial notes) are available. 

\subsection{Optimal Tontine Payouts}
\label{subsec:payout}

In practice, insurance companies funding the life annuity $c(t)$ are exposed to both systematic longevity risk (due to randomness or uncertainty in ${}_tp_x$), model risk (the risk that ${}_tp_x$ is mis-specified), as well as re-investment or interest rate risk -- a static bond portfolio can replicate the tontine payout over the medium term, but not beyond 30 years. The latter is not our focus here so we will continue to assume that $r$ is a given constant for most of what follows. 

This brings us to the tontine structure introduced in Milevsky and Salisbury (2015), in which a predetermined dollar amount is shared among survivors at every $t$. Let $d(t)$ be the rate at which funds are paid out per initial dollar invested, a.k.a. the tontine payout function. There is no reason for the tontine payout function to be a constant fixed percentage of the initial dollar invested (e.g. 4\% or 7\%) as it was historically. Getting back to the issue of optimality, we can pose the same question as considered above for annuities: what $d(t)$ is best for subscribers? The comparison is now between $d(t)$ and ${}_tp_x\,c(t)$, where $c(t)$ is the optimal annuity payout found above. 

Suppose there are initially $n$ subscribers to the tontine scheme, each depositing a dollar with the tontine sponsor. Let $N(t)$ be the random number of live subscribers at time $t$. Consider one of these subscribers. Given that this individual is alive, $N(t)-1\sim \text{Bin}(n-1,{}_tp_x)$. In other words, the number of other (live) subscribers at any time $t$ is binomially distributed with probability parameter ${}_tp_x$. 

As in the Yaari (1965) model, this individual's discounted lifetime utility is
\begin{align*}
&E[\int_0^\zeta e^{-rt }u\Big(\frac{n d(t)}{N(t)}\Big)\,dt]=\int_0^\infty e^{-rt}{}_tp_x \,E[u\Big(\frac{n d(t)}{N(t)}\Big)\mid \zeta>t]\,dt\\
&\qquad=\int_0^\infty e^{-rt}{}_tp_x\sum_{k=0}^{n-1} \binom{n-1}{k}{}_tp_x^k(1-{}_tp_x)^{n-1-k}u\Big(\frac{nd(t)}{k+1}\Big)\,dt.
\end{align*}
The constraint on the tontine payout function $d(t)$ is that the initial deposit of $n$ should be sufficient to sustain withdrawals in perpetuity. Of course, at some point all subscribers will have died. So in fact the tontine sponsor will eventually be able to cease making payments, leaving a small remainder or windfall. This gets to the issue of {\em fairness}, which we revisit in the next section. But this final-death time is not predetermined, so we treat that profit as an unavoidable feature of the tontine. Remember that we do not want to expose the sponsor to any longevity risk. It is the pool that bears this risk entirely. 

Our budget or pricing constraint is therefore that 
\begin{equation}
\label{tontineconstraint}
\int_0^\infty e^{-rt} d(t)\,dt=1.
\end{equation}

So, for example, if $d(t)=d_0$ is forced to be constant, i.e. a \emph{flat tontine} as was typical historically, then the tontine payout function (rate) is simply $d_0=r$ (or somewhat more if a cap on permissible ages is imposed, replacing the upper bound of integration in \eqref{tontineconstraint} by a value less than infinity).  

The \emph{optimal} $d(t)$ is in fact far from constant. Milevsky and Salisbury (2015) find this optimum in some generality. The following summarizes the conclusion in the case of CRRA utility $U(c)=\frac{c^{1-\gamma}}{1-\gamma}$ for $\gamma>0$, $\gamma\neq 1$ (or $U(c)=\log c$ for $\gamma=1$). Set 
$$
\beta_{n,\gamma}(p)=p\sum_{k=0}^{n-1}\binom{n-1}{k}p^k(1-p)^{n-1-k}\Big(\frac{n}{k+1}\Big)^{1-\gamma}.
$$

\begin{theorem}[Corollary 2 of Milevsky and Salisbury (2015)] The optimal retirement income tontine structure has $d(t)=d(0)\beta_{n,\gamma}({}_tp_x)^{1/\gamma}$, where $d(0)$ is chosen to make \eqref{tontineconstraint} hold.
\label{thm:cohortoptimal}
\end{theorem}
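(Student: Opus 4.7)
The plan is to solve a standard infinite-horizon calculus-of-variations problem: maximize the expected lifetime utility functional given above subject to the linear budget constraint \eqref{tontineconstraint}. Since both the objective and the constraint are integrals with integrand depending only on $t$ and on $d(t)$ (never on $d'(t)$), the Euler--Lagrange equation degenerates to a pointwise first-order condition. Specifically, forming the Lagrangian with multiplier $\lambda$ for the budget constraint, I would differentiate the integrand with respect to $d(t)$ at each fixed $t$ and equate to $\lambda e^{-rt}$.

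Next I would substitute the CRRA utility $U(c)=c^{1-\gamma}/(1-\gamma)$ and compute. The $t$-slice of the objective is
\[
e^{-rt}{}_tp_x\sum_{k=0}^{n-1}\binom{n-1}{k}{}_tp_x^{k}(1-{}_tp_x)^{n-1-k}\frac{(nd(t)/(k+1))^{1-\gamma}}{1-\gamma},
\]
whose derivative with respect to $d(t)$ equals $e^{-rt}{}_tp_x d(t)^{-\gamma}n^{1-\gamma}\sum_k\binom{n-1}{k}{}_tp_x^{k}(1-{}_tp_x)^{n-1-k}(k+1)^{\gamma-1}$. Setting this equal to $\lambda e^{-rt}$ and multiplying through, the sum collapses precisely to $\beta_{n,\gamma}({}_tp_x)$ by definition. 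Solving gives $d(t)=\lambda^{-1/\gamma}\beta_{n,\gamma}({}_tp_x)^{1/\gamma}$. Evaluating at $t=0$, where ${}_0p_x=1$ and only the $k=n-1$ term in $\beta_{n,\gamma}(1)$ survives with value $(n/n)^{1-\gamma}=1$, identifies $d(0)=\lambda^{-1/\gamma}$, so that $d(t)=d(0)\beta_{n,\gamma}({}_tp_x)^{1/\gamma}$. The multiplier $\lambda$, equivalently $d(0)$, is then pinned down uniquely by enforcing \eqref{tontineconstraint}.

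Finally, I would address the log case $\gamma=1$ separately: with $U(c)=\log c$ the pointwise FOC becomes $e^{-rt}{}_tp_x/d(t)=\lambda e^{-rt}$, giving $d(t)=d(0){}_tp_x$, which agrees with the stated formula since $\beta_{n,1}(p)=p$. To confirm that the critical point is actually a global maximum, I would invoke strict concavity: for $\gamma>0$ the map $d\mapsto \sum_k\binom{n-1}{k}p^k(1-p)^{n-1-k}U(nd/(k+1))$ is strictly concave in $d$ at each fixed $t$, so the objective functional is strictly concave in $d(\cdot)$ and the unique stationary point on the affine budget set \eqref{tontineconstraint} is the maximizer.

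The main obstacle I expect is not the optimization itself but the algebraic recognition step where the sum produced by the derivative must be matched against the definition of $\beta_{n,\gamma}(p)$; pulling out the right powers of $n$ and $(k+1)$ so that the factor ${}_tp_x$ in front lines up with the leading $p$ in $\beta_{n,\gamma}(p)$ is where a careless exponent would break the argument. A secondary concern is ensuring the admissible class of payout functions is well-defined --- I would implicitly restrict to $d(t)\ge 0$ with $\beta_{n,\gamma}({}_tp_x)^{1/\gamma}$ integrable against $e^{-rt}$, which holds automatically since ${}_tp_x\to 0$ and $\beta_{n,\gamma}(p)\le n^{1-\gamma}$ or $\le 1$ in the relevant ranges.
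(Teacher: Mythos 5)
Your derivation is correct: the pointwise Lagrangian first-order condition $U'$-computation collapses exactly to $\beta_{n,\gamma}({}_tp_x)d(t)^{-\gamma}=\lambda$, the normalization $\beta_{n,\gamma}(1)=1$ identifies $d(0)=\lambda^{-1/\gamma}$, and strict concavity plus the affine budget constraint \eqref{tontineconstraint} upgrades the stationary point to the global maximizer. This is essentially the same Euler--Lagrange argument the paper relies on (it states the result as Corollary~2 of Milevsky and Salisbury (2015) rather than reproving it), so no comparison beyond that is needed.
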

For an illustration of the typical such $d(t)$, see Figure \ref{fig03}. 

\subsection{Comparing to Others}

Tontines are not the only alternative to annuities that have been investigated in the actuarial literature. Two such product designs are the {\it group self annuitization scheme} (GSA) of Piggott, Valdez and Detzel (2005), and the optimal {\it pooled annuity fund} (PAF) analyzed by Stamos (2008). We will describe these designs in Section \ref{sec:others}. But for those already familiar with them we make some comparisons now, that may help set the context for our results. 

First of all, there is some degree of overlap between the three designs. In fact, for a homogeneous pool as described above, invested in risk-free assets, it turns out that a GSA agrees with a tontine having payout $d(t)=\frac{1}{a_x}{}_tp_x$. Following Milevsky and Salisbury (2015) we call this design a {\it natural tontine for the age-$x$ cohort}. That paper showed that this design is optimal for logarithmic utility ($\gamma=1$), since $\beta_{n,1}(p)=p$. For heterogeneous pools however, the GSA will not be a tontine at all, since the total payout will be random (and path dependent), rather than deterministic (which is the defining feature of a tontine). In the same context it turns out that the PAF which is optimal for logarithmic utility also agrees with a natural tontine. For risk aversion $\gamma\neq 1$ in contrast, the total PAF payout will be path-dependent, so again it is not a tontine. Recall that the main defining feature of the tontine is the predictability of cash-flows (numerator) distributed to the pool.

We will make these claims precise in Section \ref{sec:others}, but at this point we will simply content ourselves with Table \ref{table09}. 
$$
\text{\bf Insert Table \ref{table09} here}
$$
It compares the certainty equivalent of investments in all three products yielding the same utility as \$100 in an annuity. As must be the case, a (fairly priced) annuity provides the highest utility, followed by the PAF, then the tontine and then the GSA. Moreover the three agree when $\gamma=1$. But the principal conclusion is that the three designs yield utilities that are very similar, even with a (very) small pool of investors. If the annuity is not fairly priced, e.g. if capital risk charges are imposed upon an annuity, in the form of a loading factor that protects against systematic mortality risk, then any of the three designs can easily provide higher utility to the consumer than the annuity. 

Note that in Table \ref{table09} we do not include the GSA scheme for $\gamma=2$ or 5, because in fact it has mean utility $=-\infty$ once $\gamma>2$. This is an artifact of taking an infinite horizon and in the context of a natural tontine is discussed at greater length in section A.3 of the Appendix of Milevsky and Salisbury (2015). As indicated there, the disproportionate influence of extreme ages could be circumvented by capping payments at some advanced age such as 110. 

\section{Mixing cohorts: Equitable share prices when $d(t)$ is given}
\label{sec:mixing}

Now suppose a retirement income tontine pays out $d(t)$ per initial dollar invested, which may (or may not) be optimal for people of a given age. In other words, an inhomogeneous group of individuals subscribe to purchase shares in the tontine. Each subscriber will be entitled to a share of the total funds disbursed in proportion to the number of shares owned, with the sole caveat that the subscriber must be alive at the time of disbursement.  Once the list of subscribers is known, together with the dollar value they will invest, they are each quoted a price per share depending on their age and the size of their investment (and more generally, the ages and investments of all subscribers). Of course, the price then determines the number of shares they will receive in return for their announced investment. The issue we wish to address is how to assign prices in an equitable manner which we will take to mean that the expected present value of funds received by the various subscribers (per initial dollar invested) are all equal. The mathematical question becomes whether there exists a collection of share prices that realize this and whether such prices are unique. 

Let $n$ be the number of subscribers. For computational purposes it will sometimes be convenient to group them into $K$ homogeneous cohorts (i.e. with the same age and contribution level), though this is not actually a restriction since we could choose to take $K=n$ and deem each cohort to consist of a single individual. We will use notation that permits grouping, but in many proofs will (without loss of generality) take cohorts to consist of single individuals. For $i=1,\dots,K$ let $x_i$ be the initial age of individuals in the $i$th cohort, and let $w_i$ be the number of dollars each of them invests. Let $n_i$ be the size of the $i$th cohort, so $n=\sum n_i$ and the total initial investment is $w=\sum n_iw_i$. Therefore the total time-$t$ payouts occur at rate $wd(t)$. For notational convenience, we choose to base prices on {\it participation rates} $\pi_i$, in other words, $1/\pi_i$ is the $i$th subscriber's price per share. Let $u_i=\pi_iw_i$ be the number of shares purchased by each individual in the $i$th cohort, and let $u=\sum n_iu_i$ be the total number of shares purchased. Let $N_i(t)$ be the number in the $i$th cohort who survive to time $t$. The $d(t)$ function satisfies the budget constraint $\int_0^\infty e^{-rt}d(t)\,dt=1$, where $r$ is the interest rate. An individual in the $i$th cohort who survives to time $t$ will receive payments at a rate of 
$$
u_i\times\frac{wd(t)}{\sum_j u_j N_j(t)} = w d(t)\frac{\pi_i w_i}{\sum_j \pi_jw_jN_j(t)}.
$$
Summing this over all subscribers of course gives back a total payout rate of $wd(t)$, as long as at least one subscriber survives. In other words, $\sum_iwd(t) N_i(t)\frac{\pi_iw_i}{\sum_j \pi_j w_jN_j(t)}=wd(t)$, as long as $\sum_j \pi_j w_jN_j(t)>0$. 

Let $\zeta$ be the lifetime of an individual in the $i$th cohort. The present value of their payments is
\begin{align*}
E\Big[\int_0^{\zeta} e^{-rt}u_i\frac{wd(t)}{\sum_j u_j N_j(t)}\,dt\Big]
&=\int_0^\infty e^{-rt}{}_tp_{x_i}\pi_i w_iE\Big[\frac{wd(t)}{\sum_{j} \pi_j w_jN_j(t)}\mid \zeta>0\Big]\,dt\\
&=\int_0^\infty e^{-rt}{}_tp_{x_i}wd(t)E_i\Big[\frac{\pi_i w_i}{\sum_{j} \pi_j w_jN_j(t)}\Big]\,dt.
\end{align*}
We use the notation $E_i$ to remind us that this is a conditional expectation, in which $N_i-1\sim\text{Bin}(n_i-1,{}_tp_{x_i})$, while the other $N_j\sim\text{Bin}(n_j,{}_tp_{x_j})$.
Call the above expression $w_iF_i(\pi_1,\dots,\pi_K)$, so if $\pi=(\pi_1,\dots,\pi_K)$ then
$$
F_i(\pi)=\int_0^\infty e^{-rt}{}_tp_{x_i}wd(t)E_i\Big[\frac{\pi_i }{\sum_{j} \pi_j w_jN_j(t)}\Big]\,dt
$$
represents the present value of the returns per dollar invested by the $i$th cohort. 

Subscribers in the $i$th cohort invest $w_i$, so ideally, ``fairness'' would mean that the present value of each person's payments equals their initial fee, in other words, that $F_i(\pi)=1$ for each $i$. This is not possible, for the simple reason that there is always a positive probability of money being left on the table once everyone dies. Let $A_{i,k}(t)$ be the event that the $k$th individual in the $i$th cohort survives till time $t$. Then $\sum n_iw_i=w$ but
\begin{align*}
\sum n_iw_iF_i(\pi)&=\int_0^\infty e^{-rt}wd(t)\sum_in_i\cdot {}_tp_{x_i}E_i\Big[\frac{\pi_iw_i}{\sum_{j} \pi_j w_jN_j(t)}\Big]\,dt\\
&=\int_0^\infty e^{-rt}wd(t)E\Big[\sum_i\sum_{k=1}^{n_i}\frac{\pi_iw_i}{\sum_{j} \pi_j w_jN_j(t)}1_{A_{i,k}(t)}\Big]\,dt\\
&=\int_0^\infty e^{-rt}wd(t)E\Big[\sum_i\frac{\pi_iw_iN_i(t)}{\sum_{j} \pi_j w_jN_j(t)}1_{\{N_i(t)> 0\}}\Big]\,dt\\
&=\int_0^\infty e^{-rt}wd(t)P\Big(\sum_{j} N_j(t)>0\Big)\,dt
<\int_0^\infty e^{-rt}wd(t)\,dt = w.
\end{align*}
In other words, we have proved that
\begin{lemma}
Regardless of $\pi$, at least one cohort must have its present value $F_i(\pi)<1$. 
\label{lem:Flessthan1}
\end{lemma}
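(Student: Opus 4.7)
The plan is to establish the result by a weighted-average argument: since all weights $n_i w_i$ are strictly positive and sum to $w$, the inequality $\sum_i n_i w_i F_i(\pi) < w$ forces at least one $F_i(\pi) < 1$. So the real task is to prove this strict inequality.

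First I would compute $\sum_i n_i w_i F_i(\pi)$ by exchanging the sum and the integral, and then rewriting the conditional expectation. The key manipulation is the identity $n_i \cdot {}_tp_{x_i} \, E_i[g(N_1,\dots,N_K)] = E\bigl[\sum_{k=1}^{n_i} 1_{A_{i,k}(t)} \, g(N_1,\dots,N_K)\bigr]$, which converts each conditional expectation (over the residual binomial distribution of survivors given that one fixed individual is alive) into an unconditional expectation weighted by the sum of the survival indicators of the cohort. Once this is done, I would use $\sum_{k=1}^{n_i} 1_{A_{i,k}(t)} = N_i(t)$ to collapse the inner sum and obtain an expectation of $\sum_i \pi_i w_i N_i(t) / \sum_j \pi_j w_j N_j(t)$. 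On the event that at least one cohort has survivors, the numerator equals the denominator and the ratio is exactly $1$; on the extinction event $\{\sum_j N_j(t)=0\}$ the ratio is set to $0$ (consistently with no payment being made). Hence the expectation equals $P\bigl(\sum_j N_j(t) > 0\bigr)$.

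Combining these steps with the budget constraint $\int_0^\infty e^{-rt} d(t)\,dt = 1$ yields
$$
\sum_i n_i w_i F_i(\pi) \;=\; w\int_0^\infty e^{-rt} d(t)\, P\Bigl(\sum_j N_j(t) > 0\Bigr)\,dt \;<\; w\int_0^\infty e^{-rt} d(t)\,dt \;=\; w.
$$
The strict inequality follows because lifetimes are almost-surely finite, so $P(\sum_j N_j(t)>0) \to 0$ as $t\to\infty$ and in particular the survival probability is strictly less than $1$ on a set of positive $e^{-rt} d(t)\,dt$-measure. Dividing by $\sum_i n_i w_i = w$ shows the weighted average of the $F_i(\pi)$ is strictly less than $1$, so at least one $F_i(\pi) < 1$.

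No step looks genuinely difficult; the only delicate point is the convention chosen for the ratio when the pool is extinct, but since that event contributes zero payout by definition and zero to the integrand at each $t$, it has no bearing on the bound. The entire argument is essentially the computation already displayed immediately before the lemma statement — the plan amounts to organizing it as (i) rewriting the average via indicator functions, (ii) collapsing the surviving-cohort ratio to $1$, and (iii) invoking the tontine budget constraint together with eventual extinction to produce the strict inequality.
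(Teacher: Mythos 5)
Your argument is correct and is essentially the paper's own proof: the displayed computation preceding the lemma sums $n_iw_iF_i(\pi)$, converts the conditional expectations via the survival indicators $1_{A_{i,k}(t)}$, collapses the ratio to $P\bigl(\sum_j N_j(t)>0\bigr)$, and invokes the budget constraint to get the strict bound, exactly as you do. Your only addition is spelling out why the inequality is strict (eventual extinction on a set of positive measure), which the paper leaves implicit.
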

The closest we can come to being truly fair is to have all the $F_i(\pi)$ equal. In other words, each subscriber loses the same tiny percentage of their investment, in present value terms. We say that $\pi$ is {\it equitable} if
$$
F_i(\pi)=F_j(\pi)\, \forall i,j.
$$
Equivalently, 
$$
\text{ $F_i(\pi)=1-\epsilon$ for each $i$, where $\epsilon = \int_0^\infty e^{-rt}d(t)P\Big(\sum_{j} N_j(t)=0\Big)\,dt$}.
$$
If we want to make the tontine fair in the absolute sense, we'd need to return any monies remaining after the last death to the estates of the subscribers (as a whole, or simply to the estate of the last survivor). This is precisely why Donnelly et. al. (2014, 2015) include a death benefit in the products they analyze, which ensures that no money is left-over and allows the designs to be fair. Our approach is to focus exclusively on lifetime income. In other words, we eliminate the death benefit but keep things {\em equitable.}

Since $\sum \frac{n_iw_i}{w}F_i(\pi)=1-\epsilon$ by the above argument, it is clear that either the tontine is equitable, or there are some indices for which $F_i(\pi)>1-\epsilon$ and some for which $F_i(\pi)<1-\epsilon$. We call
$$
\theta(\pi)=max_{i\neq j}|F_i(\pi)-F_j(\pi)|
$$
the {\it inequity} of the tontine. We say that $\pi$ is {\it more equitable} than $\pi'$ if $\theta(\pi)<\theta(\pi')$. 

There is an obstruction to equity, as the following example shows. Suppose $n=K=2$. The first subscriber will receive all the available income during the period they outlive the second subscriber. Therefore if $\frac{w}{w_1}$ is sufficiently large, 
$$
F_1(\pi)> \frac{w}{w_1}\int_0^\infty e^{-rt}d(t){}_tp_{x_1}{}_tq_{x_2}\,dt
> \int_0^\infty e^{-rt}d(t)[1-{}_tq_{x_1}{}_tq_{x_2}]\,dt=1-\epsilon.
$$
For example, using reasonable ages and mortality rates it is impossible to make equitable a tontine in which one subscriber invests one dollar, and another invests a million. The most equitable such a tontine could be is in the limiting case $\pi_1=0$, so that the first subscriber only starts receiving payments once the second subscriber has died. We will address such {\it contingent tontines} in the appendix (Section \ref{sec:appendix}). 

The main theorem of this paper is as follows.

\begin{theorem} Fix $d(t)$ as well as the $n_i$, $x_i$ and $w_i$, $i=1,\dots,K$.  
\begin{enumerate}
\item If there exists an equitable choice of $\pi=(\pi_1,\dots,\pi_K)$ such that $0<\pi_i<\infty$ for each $i$, then this choice is unique up to an arbitrary multiplicative constant. 
\item  A necessary and sufficient condition for such a $\pi$ to exist is the following: 
\begin{equation}
\int_0^\infty e^{-rt}d(t)(\prod_{i\notin A}{}_tq_{x_i}^{n_i})(1-\prod_{i\in A}{}_tq_{x_i}^{n_i})\,dt < \alpha_A(1-\epsilon)
\label{typecondition1}
\end{equation}
for every $A\subset\{1, \dots, K\}$ with $0<|A|<K$, where $\alpha_A=\frac{1}{w}\sum_{k\in A}n_kw_k$. 
\end{enumerate}
\label{thm:existence}
\end{theorem}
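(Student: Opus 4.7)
For uniqueness (part (a)), I would use the strict monotonicity of the $F_i$ together with their homogeneity of degree $0$: one checks by differentiating under the integral that $F_i$ is strictly increasing in $\pi_i$ and strictly decreasing in each $\pi_j$ for $j\neq i$, provided all coordinates of $\pi$ are positive. Given two equitable choices $\pi,\pi'$ with all coordinates in $(0,\infty)$, rescale $\pi'$ by a positive constant (which preserves equity by homogeneity) so that $\pi'\geq \pi$ coordinate-wise with equality in some coordinate $i^*$. The numerators of $F_{i^*}(\pi)$ and $F_{i^*}(\pi')$ then coincide while the denominator $\sum_j\pi'_jw_jN_j(t)$ pointwise dominates $\sum_j\pi_jw_jN_j(t)$, so $F_{i^*}(\pi')\leq F_{i^*}(\pi)=1-\epsilon$; equity forces the inequality to be an equality, and unpacking the resulting integral identity (using that each $N_j$ is nonzero on a set of positive probability) pins down $\pi'_j=\pi_j$ for every $j$.

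For necessity in part (b), fix a proper nonempty $A$. Summing $F_i(\pi)=1-\epsilon$ over $i\in A$ with weights $n_iw_i$ yields $\sum_{i\in A}n_iw_iF_i(\pi)=w\alpha_A(1-\epsilon)$, which represents the total expected present value of payments accruing to cohorts in $A$. Decomposing by the event $\{N_{A^c}(t)=0\}$: payments made while $A^c$ is wholly dead go entirely to $A$ and contribute $w\int_0^\infty e^{-rt}d(t)(\prod_{j\notin A}{}_tq_{x_j}^{n_j})(1-\prod_{i\in A}{}_tq_{x_i}^{n_i})\,dt$, while payments made while some of $A^c$ still survives contribute strictly positively to $A$ (each $\pi_j>0$ gives $A$ a strictly positive share on an event of positive measure for a positive measure of $t$). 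The integral in \eqref{typecondition1} is therefore strictly less than $\alpha_A(1-\epsilon)$.

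For sufficiency in part (b), my plan is a topological fixed-point argument. Homogeneity lets me restrict to the compact simplex $S=\{\pi\geq 0:\sum_i\pi_i=1\}$. For $\delta>0$ consider the continuous regularized map $T_\delta\colon S\to \mathrm{int}(S)$ given by $T_\delta(\pi)_i=(1-\delta)\pi_i e^{-(F_i(\pi)-(1-\epsilon))}/Z(\pi)+\delta/K$, which drives mass away from cohorts with $F_i>1-\epsilon$ into those with $F_i<1-\epsilon$. Brouwer supplies a fixed point $\pi^\delta\in\mathrm{int}(S)$; extract a subsequential limit $\pi^*\in S$ as $\delta\to 0$. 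The limiting fixed-point relation forces $F_i(\pi^*)$ to agree across $\{i:\pi^*_i>0\}$, and the universally-valid identity $\sum_i\alpha_iF_i\equiv 1-\epsilon$ then pins the common value at $1-\epsilon$, giving equity once one rules out $\pi^*\in\partial S$. If instead $\pi^*$ lies on a face $S_A$ for a proper nonempty $A$, the reduced-tontine formula gives $F_i|_{S_A}=\beta_A/\alpha_A$ on $A$, where $\beta_A=\int_0^\infty e^{-rt}d(t)(1-\prod_{i\in A}{}_tq_{x_i}^{n_i})\,dt$, and one must derive a contradiction with \eqref{typecondition1}. This step relies on a careful directional analysis of $F_j$ for $j\in A^c$ as $\pi^\delta\to\pi^*$: the ratio $\pi_j/\sum_k\pi_kw_kN_k$ is of type $0/0$ on the event $\{N_A=0\}$, and its direction-dependent limit records the ``phantom share'' cohort $j$ would receive if $A$ were extinct. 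Tracking those limits and feeding them back into the balance equations provided by \eqref{typecondition1}, applied across appropriate sub-indices of $A^c$, produces the contradiction.

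\textbf{Main obstacle.} The crux is the boundary exclusion in sufficiency. The hypothesis \eqref{typecondition1} quantifies $\int e^{-rt}d(t)P(N_{A^c}=0,N_A>0)\,dt$, while the natural quantity that surfaces from a boundary fixed point on $S_A$ is $\beta_A=\int e^{-rt}d(t)P(N_A>0)\,dt$; bridging the two requires harnessing the direction-dependent boundary behavior of $F_j$ (for $j\in A^c$) and showing that \eqref{typecondition1}, possibly applied to several related subsets, is precisely what forbids every such boundary limit. Controlling this direction-dependence and converting it into an unambiguous inequality is the main technical difficulty I anticipate.
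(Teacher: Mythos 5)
Your part (a) and the necessity half of part (b) are essentially correct and essentially the paper's own arguments: for uniqueness the paper also exploits monotonicity at the coordinate of extremal ratio (it differentiates $F_i$ along the segment joining the two candidate rate vectors and picks $i$ minimizing $\tilde\pi_i/\pi_i$, which is an infinitesimal version of your rescale-and-compare step), and for necessity it performs exactly your weighted sum of $F_i(\pi)=1-\epsilon$ over $i\in A$ followed by the split according to whether $A^c$ is extinct.

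The genuine gap is the sufficiency half of part (b), which is where the real content of the theorem lies: your Brouwer argument is only a plan, and the step you yourself flag as the main obstacle --- excluding boundary limits $\pi^*\in\partial S$ of the regularized fixed points --- is not carried out. The difficulty is real: on a face $\{\pi_j=0,\ j\in A^c\}$ the functions $F_j$, $j\in A^c$, have direction-dependent limits, so neither your limiting fixed-point relation nor the identity $\sum_i\alpha_iF_i=1-\epsilon$ controls them, and the quantity produced by the surviving coordinates is $\beta_A=\int_0^\infty e^{-rt}d(t)\bigl(1-\prod_{i\in A}{}_tq_{x_i}^{n_i}\bigr)dt$ rather than the extinction-conditioned integral in \eqref{typecondition1}. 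The paper's proof is built precisely to overcome this: it compactifies the space of rates by a Martin-boundary-style embedding $\mathcal{P}_0$ whose boundary points record the full hierarchy of speeds at which coordinates vanish (ordered groups $A_0,\dots,A_J$ with renormalized limits), interprets these as \emph{contingent tontines} in which later groups collect only after all earlier groups are extinct, proves each $F_i$ extends continuously to $\mathcal{P}_0$, takes a most-equitable point by compactness, shows via perturbation that $F_i$ is constant on $A_0$ and on $A_J$, and then the weighted sum over the last group $A_J$ produces exactly the left side of \eqref{typecondition1} (because $A_J$ is paid only after everyone else is dead), yielding the contradiction; a separate concavity-in-$k_i$ argument then reduces the $2^n-2$ individual-level inequalities to the $2^K-2$ cohort-level ones. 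Within your framework the missing bridge would be the observation that \eqref{typecondition1} applied to $A^c$ is equivalent to $\beta_A>\alpha_A(1-\epsilon)$, so that along your fixed points $\sum_{i\in A}\alpha_iF_i\to\beta_A>\alpha_A(1-\epsilon)$ while the averaging identity forces some vanishing coordinate $j$ to have $F_j$ bounded below $1-\epsilon$, contradicting a map that pushes mass toward low-$F_j$ coordinates; but none of this (uniform convergence of $F_i$ on the support, the quantitative lower/upper bounds needed in the fixed-point equation, or possible multi-level degenerations of the sequence $\pi^\delta$) is established in your proposal, so the existence direction remains unproven as written.
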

We will prove this result in the appendix, where we also expand on the meaning of condition \eqref{typecondition1}. In heuristic terms:  if there is a cohort who find the tontine favourable even if they have to wait for income until all subscribers from other cohorts have died, then equity is impossible.  Note that our formulation tacitly assumed that all members of a cohort share the same participation rate. If equitable rates exist, then this must in fact be the case. To see this, subdivide cohort $i$ into $n_i$ cohorts, each with a single member, and apply the uniqueness conclusion of the above theorem. 

In Section \ref{sec:natural} we examine some plausible scenarios with utility included. Here we treat some extreme examples to illustrate how equitable rates may vary as well as giving some cases in which they fail to exist. We exhibit values in the two-cohort case ($K=2$), using Gompertz hazard rates, i.e. $\lambda_x=\frac{1}{b}e^{\frac{x-m}{b}}$ at age $x$. Parameters are $m=88.72$, $b=10$, and $r=4\%$. In Figure \ref{fig1} we look at age disparities, and in Figure \ref{fig2} we look at disparities in investment levels. 
$$
\text{\bf Insert Figures \ref{fig1} and \ref{fig2} here.}
$$

These figures show the spread in $\pi$ (ratio of the largest to the smallest) narrowing as the population size increases. This is not a general rule however. If in Figure \ref{fig1} we had taken $x_1=90$ and $x_2=65$ the spread would narrow at first but then widen. With $x_1=65$, $x_2=85$, and $n_1=1$ there would be equitable rates for small values of $n_2$ but not for large ones. In Figure \ref{fig2}, the higher the outlier investment $w_2$, the larger the size $n_1$ of the cohort investing $w_1=\$1$ must be, before equitable rates exist. For example, if $w_2=\$20$ we require $n_1\ge5$ for equity to be possible. But $w_2=\$100$ requires $n_1\ge 23$, and $w_2=\$500$ requires $n_1\ge 114$.

Note that equity being infeasible is not purely a phenomenon of small populations. A poorly designed tontine can also produce this effect. For example, suppose we have two cohorts of size $n_1=n_2=100$ with ages $x_1=65$ and $x_2$ to be specified. If each member of the second cohort contributes $w_2=1$, then for a large enough value of $w_1$ the tontine must be inequitable. With a well designed tontine it typically takes a large value of $w_1$ to destroy equity. But if we take a flatter tontine than is desirable -- say a tontine whose $d(t)$ would be natural for a population of age 50 -- then quite modest values of $w_2$ will produce inequity, especially once there is a disparity in cohort ages. For example, if $x_2=80$ (resp. 75/70/65) then even $w_2=7$ (resp. 14/37/209) will accomplish this, according to the Theorem \ref{thm:existence} criterion.\footnote{What this all means practically speaking is that Compton's (1833) scheme to charge different share prices for tontine stock might not work for all ages and investment amounts. The {\em equitable} price is most definitely not linear in the amount invested, which is in contrast to a tontine scheme with homogenous ages. And, while one certainly can't fault Compton (1833) for not realizing this fact, we believe it is an interesting aspect of his rather-clever proposal.}

\section{Utility, asymptotics, and optimality}
\label{sec:utility}

\subsection{Utility and loading factors}
\label{subsec:utility}

For an arbitrary tontine payout $d(t)$ (satisfying \eqref{tontineconstraint} but not necessarily optimal) and arbitrary participation rates $\pi_i$ (not necessarily equitable), we may consider the utility of the cash flow received by an individual from the $i$th cohort.  Namely
$$
\int_0^\infty e^{-rt}{}_tp_{x_i} E_i\Big[U(\frac{wd(t)}{\sum w_j\pi_j N_j(t)}\pi_iw_i)\Big]\,dt.
$$
We are interested in the effect of inhomogeneity in the subscriber population. In particular, we would like to understand whether adding individuals to a tontine raises or lowers utility (and by how much), when the added individuals differ from the rest (in homogeneous populations, adding individuals always increases utility). In particular, for a cohort of size $n_i$ in a heterogeneous tontine with payout $d(t)$, the natural comparison will contrast their utility with that of an optimized tontine $\hat d(t)$ in which only those $n_i$ homogeneous individuals participate. Thus we define a {\it loading factor} $\delta_i$, which (when applied to the homogeneous tontine) makes the two utilities equal. In other words, 
$$
\int_0^\infty e^{-rt}{}_tp_{x_i} E_i\Big[u\Big(\frac{wd(t)}{\sum w_j\pi_j N_j(t)}\pi_iw_i\Big)\Big]\,dt
=\int_0^\infty e^{-rt}{}_tp_{x_i} E_i\Big[u\Big(\frac{n_i\hat d(t)}{ N_i(t)}(1-\delta_i)w_i\Big)\Big]\,dt.
$$
If $\delta_i>0$ this means that the cohort loses utility from the addition of heterogeneous individuals to the pool. If $\delta_i<0$ then the cohort gains utility from the addition of these individuals. For a different comparison, between tontines and annuities, see Milevsky and Salisbury (2015), where a different loading factor is used. See also the related work by Hanewald et. al. (2015) which examines how product loadings might affect the choice between different mortality-contingent claims.

In Section \ref{sec:natural}, we will give numerical calculations of loadings for various choices of $d(t)$, and we will see that in a well-designed tontine, adding participants increases utility (i.e. loadings are negative). We will work with $\gamma=1$ so $U(c)=\log c$, and the above formula simplifies considerably. By results of Milevsky and Salisbury (2015), the optimal $\hat d(t)$ is the tontine that is natural for the age-$x_i$ cohort, in other words, $\hat d(t)=\frac{1}{a_{x_i}}{}_tp_{x_i}$. 

\subsection{Asymptotics and the proportional tontine}
\label{subsec:proportional}

We fix $K$, the $x_i$, and the $w_i$ and consider the limit of the $\pi_i$ when the total number of subscribers $n=\sum n_i\to\infty$. Let $\alpha_i>0$ and $\sum_{i=1}^K\alpha_i=1$. Assume that the $n_i\to\infty$ in such a way that $\frac{n_iw_i}{w}\to\alpha_i$, so $\alpha_i$ represents the fraction of the initial investment attributable to the  $i$th cohort. Then
$$
F_i(\pi)=
\int_0^\infty e^{-rt}{}_tp_{x_i}wd(t)E_i\Big[\frac{\pi_i }{\sum \pi_j w_jN_j(t)}\Big]\,dt.
\to
\int_0^\infty e^{-rt}d(t)\frac{\pi_i \cdot{}_tp_{x_i}}{\sum_{j=1}^K \pi_j \alpha_j\cdot{}_tp_{x_j}}\,dt.
$$
A particular case of this requires particular attention. Let $a_x=\int_0^\infty e^{-rt}{}_tp_{x}\,dt$ be the standard annuity price of \$1 for life for age $x$ individuals. Consider $d(t)=\sum_j \frac{\alpha_j}{a_{x_j}}{}_tp_{x_j}$ (which clearly satisfies the condition that $\int_0^\infty e^{-rt}d(t)=1$).  In this case, $F_i(\pi)\to a_{x_i}\pi_i$, so the equitable participation rates asymptotically become $\pi_i=\frac{1}{a_{x_i}}$. We call a tontine with $d(t)=\sum\frac{n_jw_j}{w}\times\frac{{}_tp_{x_j}}{a_{x_j}}$ and  $\pi_i=\frac{1}{a_{x_i}}$ a {\it proportional tontine}, and emphasize that it is equitable only in the limit as $n\to\infty$. In the case of a homogeneous population (i.e. $K=1$), the proportional tontine agrees with what we have earlier called the natural tontine for this cohort. 

One motivation for this particular design is that the payout rate to a surviving individual from the $i$th group, at time $t$, is asymptotically
$$
\frac{d(t)}{\sum \pi_j\alpha_j\cdot{}_tp_{x_j}}\pi_i=\pi_i=\frac{1}{a_{x_i}}
$$
per unit. In other words, the rate of payment to a surviving individual remains constant in time, and is simply the standard annuity factor of $\frac{1}{a_{x_i}}$ per dollar of initial premium. In this sense, a proportional tontine reproduces (in the limit) the payment structure and cost of a standard fixed annuity for each subscriber. We will shortly see a further motivation, when we show that it is asymptotically optimal. In Section \ref{sec:others} we will connect this design to the group self-annuitization scheme (GSA). 

How do our utility loadings behave when $n\to\infty$ as above? The above equation becomes that
$$
\int_0^\infty e^{-rt}{}_tp_{x_i} u\Big(\frac{d(t)}{\sum \pi_j\alpha_j\cdot{}_tp_{x_j}}\pi_iw_i\Big)\,dt
=\int_0^\infty e^{-rt}{}_tp_{x_i} u\Big(\frac{\hat d(t)}{ {}_tp_{x_i}}(1-\delta_i)w_i\Big)\,dt.
$$
Take $d(t)$ to be the proportional tontine, so in the limit, so $\pi_j=\frac{1}{a_{x_j}}$ is equitable in the limit. As above, take $u$ to be logarithmic, and $\hat d(t)$ to be natural for the age-$x_i$ cohort. We obtain that
$$
\int_0^\infty e^{-rt}{}_tp_{x_i} u\Big(\frac{w_i}{a_{x_i}}\Big)\,dt
=\int_0^\infty e^{-rt}{}_tp_{x_i} u\Big(\frac{w_i}{a_{x_i}}(1-\delta_i)\Big)\,dt,
$$
from which we immediately get the following:
\begin{lemma}
Asymptotically, the proportional tontine has utility loadings $\delta_i=0$. 
\label{lem:zeroloading}
\end{lemma}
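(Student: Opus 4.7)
The plan is to start directly from the displayed identity immediately preceding the lemma statement, which already encodes the $n\to\infty$ limit (with $n_iw_i/w\to\alpha_i$), the choice of logarithmic utility, and the comparison between the proportional tontine $d(t)=\sum_j \frac{\alpha_j}{a_{x_j}}{}_tp_{x_j}$ with asymptotic equitable rates $\pi_j=1/a_{x_j}$, and the cohort-$i$ natural tontine $\hat d(t)=\frac{1}{a_{x_i}}{}_tp_{x_i}$. The main structural observation is that in this limit both arguments of $u$ become deterministic and, crucially, constant in $t$: on the left, numerator and denominator share the factor $\sum_j \frac{\alpha_j}{a_{x_j}}{}_tp_{x_j}$ and collapse to $w_i/a_{x_i}$; on the right, $\hat d(t)/{}_tp_{x_i}$ collapses to $1/a_{x_i}$, yielding argument $(1-\delta_i)w_i/a_{x_i}$.

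With both integrands constant in $t$, I would pull the $\log$'s outside the integrals and use $\int_0^\infty e^{-rt}{}_tp_{x_i}\,dt=a_{x_i}$ to rewrite the identity as
\begin{equation*}
a_{x_i}\log\Big(\tfrac{w_i}{a_{x_i}}\Big)=a_{x_i}\log\Big(\tfrac{(1-\delta_i)w_i}{a_{x_i}}\Big).
\end{equation*}
Since $a_{x_i}>0$, dividing through and using the additivity of $\log$ reduces this to $\log(1-\delta_i)=0$, i.e. $\delta_i=0$, which is the claim.

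There is essentially no analytic obstacle here; the argument is algebraic once the limiting payout rate per surviving dollar has been identified as the constant $1/a_{x_i}$ in both the proportional tontine (for cohort $i$) and in its natural-tontine comparator. The only step that warrants care is the passage to the limit that produces the preceding displayed equation in the first place, which requires a dominated-convergence (or Hoeffding-style concentration) justification for exchanging $n\to\infty$ with the integral over $t$ and the expectation $E_i$; however this is already carried out in the discussion before the lemma and need not be repeated in the proof itself.
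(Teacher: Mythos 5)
Your proposal is correct and follows essentially the same route as the paper: the paper derives exactly the displayed identity with both arguments of $u$ collapsed to the constant $w_i/a_{x_i}$ and $(1-\delta_i)w_i/a_{x_i}$, and then reads off $\delta_i=0$ immediately, which is precisely the algebraic step you spell out with $\log$ and the factor $a_{x_i}$. Nothing further is needed.
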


\subsection{Can a tontine be optimal for multiple cohorts?}
\label{subsec:optimality}

A natural question is whether it is possible to design a tontine to be optimal for multiple age cohorts. This turns out not to be possible, except in the limit as $n\to\infty$. To formulate the question, we include equity as an additional set of constraints in the optimization problem. In particular, we wish to choose $d(t)$ and the $\pi_j$ to maximize the utility of the $i$th cohort
$$
\int_0^\infty e^{-rt}{}_tp_{x_i} E_i\Big[U(\frac{wd(t)}{\sum w_j\pi_j N_j(t)}\pi_iw_i)\Big]\,dt
$$
over $d(t)\ge 0$, subject to the budget constraint $\int_0^\infty e^{-rt}d(t)\,dt=1$ and the equity constraints
$$
\int_0^\infty e^{-rt}{}_tp_{x_i}wd(t)E_i\Big[\frac{\pi_i }{\sum_{j=1}^K \pi_j w_jN_j(t)}\Big]\,dt
=\int_0^\infty e^{-rt}{}_tp_{x_\ell}nd(t)E_\ell\Big[\frac{\pi_\ell}{\sum_{j=1}^kK\pi_j w_jN_j(t)}\Big]\,dt
$$
for $\ell\neq i$. 

In the limit as $n\to\infty$ we wish to maximize 
$$
\int_0^\infty e^{-rt}{}_tp_{x_i} U(\frac{d(t)}{\sum \alpha_j\pi_j \cdot{}_tp_{x_j}}\pi_iw_i)\,dt
$$
over $d(t)\ge 0$, subject to the budget constraint $\int_0^\infty e^{-rt}d(t)\,dt=1$ and the equity constraints
$$
\int_0^\infty e^{-rt}{}_tp_{x_i}d(t)\frac{\pi_i }{\sum_{j=1}^K \alpha_j\pi_j \cdot{}_tp_{x_j}}\,dt
=\int_0^\infty e^{-rt}{}_tp_{x_\ell}d(t)\frac{\pi_\ell }{\sum_{j=1}^K \alpha_j\pi_j \cdot{}_tp_{x_j}}\,dt
$$
for $\ell\neq i$. This version of the problem simplifies if reformulated in terms of $\Gamma(t)=\frac{d(t)}{\sum_{j=1}^k \alpha_j\pi_j \cdot{}_tp_{x_j}}$. Now we seek to maximize
$\int_0^\infty e^{-rt}{}_tp_{x_i} U(\pi_iw_i\Gamma(t))\,dt$
over $\Gamma(t)\ge 0$, subject to the budget constraint $\int_0^\infty e^{-rt}\Gamma(t)\sum \alpha_j\pi_j \,{}_tp_{x_j}\,dt=1$ and the equity constraints
$\int_0^\infty e^{-rt}\Gamma(t)\pi_i\,{}_tp_{x_i}\,dt
=\int_0^\infty e^{-rt}\Gamma(t)\pi_\ell\,{}_tp_{x_\ell}\,dt$
for $\ell\neq i$. 

The equity constraints become merely that 
$$
\pi_\ell = \pi_i\frac{\int_0^\infty e^{-rt}\Gamma(t)\,{}_tp_{x_i}\,dt}{\int_0^\infty e^{-rt}\Gamma(t)\,{}_tp_{x_\ell}\,dt}.
$$
and substituting back, the budget constraint becomes that $\pi_i\int_0^\infty e^{-rt}\Gamma(t)\,{}_tp_{x_i}=1$. This puts us back in the context of optimizing the simple annuity of Theorem \ref{thm:annuity}, which implies that the optimal $\Gamma(t)$ is constant. If we normalize so $\pi_i=\frac{1}{a_{x_i}}$ then the $\pi_\ell=\frac{1}{a_{x_\ell}}$, and we get $\Gamma(t)=1$. 

In particular, optimizing the utility of the $i$th cohort, in the presence of equity constraints, asymptotically gives precisely the proportional tontine described in the last section, i.e. $d(t)=\sum_j \frac{\alpha_j}{a_{x_j}}{}_tp_{x_j}$. Therefore this optimal tontine (in this case, really a type of annuity) has the same design, regardless of which $i$ one chooses to optimize for. We have shown that
\begin{proposition} Assume a strictly concave utility function. 
In the limit as $n\to\infty$, the proportional tontine optimizes the utility of each cohort simultaneously.
\label{prop:asymptoticoptimality}
\end{proposition}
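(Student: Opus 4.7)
The plan is to exploit the reformulation already carried out in the paragraphs preceding the proposition statement and reduce the multi-cohort problem to a Yaari-style single-annuity optimization. The key device is the change of variable $\Gamma(t)=d(t)/\sum_j \alpha_j\pi_j\cdot{}_tp_{x_j}$, which turns the $i$th cohort's limiting utility into $\int_0^\infty e^{-rt}{}_tp_{x_i}U(\pi_iw_i\Gamma(t))\,dt$ and decouples $\Gamma$ from the participation rates inside the integrand. Once the equity constraints are rewritten in the form displayed in the excerpt, they merely fix each $\pi_\ell$ as an explicit functional of $\Gamma$ and of $\pi_i$, and the budget constraint collapses to the single scalar equation $\pi_i\int_0^\infty e^{-rt}\Gamma(t)\,{}_tp_{x_i}\,dt=1$.

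Next I would let $c(t)=\pi_iw_i\Gamma(t)$. Under this substitution the objective becomes $\int_0^\infty e^{-rt}{}_tp_{x_i}U(c(t))\,dt$ and the budget constraint becomes $\int_0^\infty e^{-rt}{}_tp_{x_i}c(t)\,dt=w_i$, which is exactly (a scaled version of) the annuity optimization of Theorem~\ref{thm:annuity}. Strict concavity of $U$ guarantees a unique maximizer, and the theorem tells us that maximizer is constant in $t$, with value $c(t)\equiv w_i/a_{x_i}$. Translating back, $\pi_i\Gamma(t)=1/a_{x_i}$ for a.e. $t$, so $\Gamma$ must be constant; normalizing by choosing $\pi_i=1/a_{x_i}$ forces $\Gamma(t)\equiv 1$.

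Feeding $\Gamma\equiv 1$ and $\pi_i=1/a_{x_i}$ into the equity relation $\pi_\ell=\pi_i\,\big(\int e^{-rt}\Gamma(t)\,{}_tp_{x_i}\,dt\big)/\big(\int e^{-rt}\Gamma(t)\,{}_tp_{x_\ell}\,dt\big)$ yields $\pi_\ell=1/a_{x_\ell}$ for every $\ell$, and then $d(t)=\Gamma(t)\sum_j\alpha_j\pi_j\,{}_tp_{x_j}=\sum_j(\alpha_j/a_{x_j})\,{}_tp_{x_j}$. That is precisely the proportional tontine of Section~\ref{subsec:proportional}. Crucially, nothing in the derivation used the specific index $i$ other than through the normalization step, so the same $d(t)$ and the same vector $(\pi_\ell)$ emerge as the unique optimizer irrespective of which cohort's utility one maximizes. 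This is exactly the "simultaneously optimal" conclusion required.

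The main conceptual obstacle is the worry that simultaneously imposing $K-1$ equity constraints might either over-determine the problem or distort the annuity-style reduction. The change of variables to $\Gamma$ is what defuses this: after the substitution, the equity constraints drop out as mere definitions of the $\pi_\ell$ in terms of $\pi_i$ and $\Gamma$, leaving a single scalar budget constraint and a genuinely one-dimensional calculus-of-variations problem. A secondary technical point I would want to check carefully is that the passage to the $n\to\infty$ limit in the objective and the equity constraints is legitimate under the assumption $n_iw_i/w\to\alpha_i$; this is essentially a law-of-large-numbers argument for the binomial weights $\pi_jw_jN_j(t)/w$ concentrating at $\pi_j\alpha_j\,{}_tp_{x_j}$, combined with dominated convergence given the integrability of $e^{-rt}d(t)$, and it is implicitly used in the formulas quoted just before the proposition.
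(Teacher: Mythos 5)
Your proposal is correct and follows essentially the same route as the paper: the change of variable $\Gamma(t)=d(t)/\sum_j\alpha_j\pi_j\,{}_tp_{x_j}$, absorption of the equity constraints into the $\pi_\ell$, reduction of the budget constraint to $\pi_i\int_0^\infty e^{-rt}\Gamma(t)\,{}_tp_{x_i}\,dt=1$, and appeal to Theorem~\ref{thm:annuity} to conclude $\Gamma$ is constant, yielding the proportional tontine independently of the cohort index $i$. Your closing remark about justifying the $n\to\infty$ limit (concentration of the binomial counts plus dominated convergence) is a point the paper leaves implicit, but it does not change the substance of the argument.
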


The original optimization problem (i.e. in the setting of finite $n$) can also be solved, though not so cleanly. We do not present this here, except to note that when we optimize even the logarithmic utility of the $i$th cohort, the results turn out to no longer be consistent when we vary $i$. In other words, it is typically impossible to make everyone happy simultaneously. This is one reason we feel it is reasonable to first fix a tontine structure $d(t)$ (as we have done above), and then allow people participate at equitable rates if they so wish. Naturally, this means one of the questions we will need to answer is how significant their utility loss is, when doing so. 

\section{Our Suggested $d(t)$: The Natural and Equitable Tontine}
\label{sec:natural}
In the context of a homogeneous population of age $x$, all investing equal amounts, the design proposed in Milevsky and Salisbury (2015) had $d(t)=\frac{1}{a_x}{}_tp_x$, i.e. the natural tontine for age $x$. In this context, the design is optimal in the case of logarithmic utility, and near-optimal otherwise. In this section, we wish to propose a suitable generalization in the heterogeneous setting. 

For heterogeneous tontines, we have seen that overall optimality is not feasible (except asymptotically). In that context, we propose adopting the following design, which performs well in numerical experiments we have conducted, reduces to the above design in the case of a homogeneous population, and agrees with the proportional tontine in the limit as $n\to\infty$ (so is optimal asymptotically).

Fix the $x_i$, $w_i$, and $n_i$. We say that a tontine is {\it natural} if $d(t)$ is at all times proportional to the mean number of surviving tontine shares. In other words, $d(t)=c\sum u_j n_j\cdot {}_tp_{x_j}=c\sum \pi_jw_jn_j\cdot {}_tp_{x_j}$. Integrating, we see that 
$$
d(t)
=\sum_i\Big[\frac{\pi_in_iw_i}{\sum_ja_{x_j}\pi_jw_jn_j}\Big]{}_tp_{x_i}.
$$
Note that once the $\pi_i$ are given, the natural tontine is fully determined by the budget constraint. But to construct a tontine that is both natural and equitable, we must compute the $\pi_i$ and $d(t)$ simultaneously. In practice this is more complicated than (as above) simply fixing a $d(t)$ and computing equitable $\pi$'s, but not unduly so (at least when the number $K$ of types is small). 

The following two tables (Table \ref{table03} for $K=2$ cohorts, Table \ref{table04} for $K=3$) 
$$
\text{\bf Insert Tables \ref{table03} and \ref{table04} here.}
$$
display such natural and equitable tontines, and compare them to ``natural'' tontines that would have been chosen if the population had been homogeneous (but with equitable participation rates). We also compare with the corresponding proportional tontines, though those are not equitable. Though the theoretical basis of proportional tontines is not as appealing as that of natural ones, they are simpler to compute, and they do appear to perform reasonably in practice. We view them as an acceptable alternative if computational resources are not available to work out equitable $\pi_i$'s and natural $d(t)$'s.  Note that since these tables normalize $\pi$ to make $\pi_i=1$ for some $i$, this means that the proportional tontine has $\pi_j=a_{x_i}/a_{x_j}$.

First consider Table \ref{table03}. Rows labelled ``A'' and ``D'' use tontine designs that would be natural for homogeneous populations, of age 65 and 75 respectively. Equitable $\pi$'s are then computed. In row A, both $\delta_1$ and $\delta_2$ start negative ($n_1=1=n_2$), meaning that the benefit of the extra participant outweighs the impact of heterogeneity. As the common value of $n_1=n_2$ rises, the $\delta_i$ become positive and defects in the design become more relevant. In particular, loadings remain strictly positive asymptotically -- while the product becomes essentially an annuity, it is not an optimal one. Note that the loadings are not actually monotone. Adding participants is more beneficial for older (age 75) participants than for younger (age 65) ones. Surprisingly, in the presence of age 65 participants, even the age 75 ones get more benefit from an age-65 design over an age-75 one. 

Rows labelled ``B'' correspond to a truly natural and equitable design. Rows labelled ``C'' are proportional designs. In most cases, either performs better for both cohorts than the homogeneous designs do. The problem with the A design is now clear -- to be equitable it requires a higher participation rate $\pi_2$, which dilutes the benefit of adding individuals to the tontine, and produces utility loss. In contrast, rows B and C typically show a negative loading (i.e. a utility gain), though with different choices of parameters (not shown) this can in fact sometimes not be the case. 

Comparing B and C, it is generally the case that the older cohort prefers a natural design, whereas the younger cohort prefers a proportional design. The proportional design comes closer to equalizing the utility gains between the cohorts. The two designs perform similarly asymptotically. The factor contributing most to their difference is the equitability of $\pi$ rather than the choice of $d(t)$ -- using the proportional $d(t)$ but equitable $\pi$'s would turn out to give very similar utilities to the fully natural design. 

Table \ref{table04} treats the three-cohort case, comparing the natural and proportional designs with a design that would be natural for the age-65 cohort alone. Now all three designs have very similar effects on utility. Otherwise the table is consistent with empirical observations made above: adding people to the tontine is generally favourable (despite heterogeneity); and the utility improvement is greater for the older participants. Note that the good performance of these designs may in part be a consequence of a balance between ages 60 and 70 -- asymetric designs (not shown) are less consistent.

Figure \ref{fig03} shows a simulation of the payouts from a 2-cohort natural and equitable design with $n_1=200$ members and $n_2=50$ members. Note that at moderate ages it comes close to achieving a constant and steady payout to each survivor. There is higher volatility in payments at advanced ages, once the number of survivors in the pool is small. Mitigating that volatility would be a requirement for a practical tontine design. In fact, we believe this is not hard to achieve, for the following reason: our tontines are designed to be optimal when individuals have no exogenous income. In reality there is typically some exogenous pension income (eg. Social Security). Unpublished work by Ashraf (2015) suggests that in the presence of exogenous income, optimal tontines should be designed to taper off and cease payments at advanced ages (eg by age 100). If this is done, then by the time the survivor pool is very small, any variability in its size will no longer matter. 
$$
\text{\bf Insert Figure \ref{fig03} here.}
$$

Natural and equitable tontines appear to exist for a broad range of parameter values (though not universally -- the obstruction raised in Section \ref{sec:mixing} still remains valid). But we have not yet succeeded in finding necessary and sufficient conditions for existence, or in establishing uniqueness, as in Theorem \ref{thm:existence}. Therefore resolving the following remains a topic for further research. 

\begin{conjecture} 
Fix the $x_i$, $n_i$, and $w_i$, $i=1,\dots,K$. Under broad conditions there will exist a choice of $\pi=(\pi_1,\dots,\pi_K)$ such that the corresponding natural tontine is equitable. Up to an arbitrary multiplicative constant, there is at most one such $\pi$. 
\end{conjecture}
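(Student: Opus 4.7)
The plan is to cast the conjecture as a fixed-point problem and attack existence by topological methods, reserving the harder uniqueness question for a monotonicity or contraction argument. First, I would normalize the scale of $\pi$ so that $\sum_j a_{x_j}\pi_j w_j n_j = 1$, which makes the budget constraint $\int_0^\infty e^{-rt}d_\pi(t)\,dt=1$ automatic with $d_\pi(t)=m_\pi(t)=\sum_j \pi_j w_j n_j\cdot{}_tp_{x_j}$. For each such $\pi$, by Theorem \ref{thm:existence} there is (up to an overall scalar) a unique equitable participation vector $\pi^*$ for the fixed payout $d_\pi$; call its normalized version $T(\pi)$. Natural and equitable tontines are exactly the fixed points of $T$.

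For \emph{existence}, I would apply Brouwer's theorem. The key tasks are to identify the ``broad conditions'' on the $x_i,w_i,n_i$ under which (i) the criterion \eqref{typecondition1} of Theorem \ref{thm:existence} holds for every $\pi$ in a suitable compact convex set $\Pi$ in the positive simplex, and (ii) $T(\Pi)\subset\Pi$. In the limiting regime $n\to\infty$, the proportional choice $\pi_i=1/a_{x_i}$ is equitable for the proportional $d$, and \eqref{typecondition1} holds with ample slack; by continuity this should persist in a neighbourhood and for sufficiently populous, not-too-unbalanced pools, allowing one to take $\Pi$ to be, for instance, a small sub-level set of the inequity functional $\theta$ centred there. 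Continuity of $T$ follows from implicit-function arguments using the uniqueness clause of Theorem \ref{thm:existence}.

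The \emph{uniqueness} claim is the main obstacle. The cleanest reduction would be to show that if $\pi^{(1)}$ and $\pi^{(2)}$ both yield natural equitable tontines, then the resulting payout functions $d_{\pi^{(1)}}$ and $d_{\pi^{(2)}}$ must coincide, whereupon Theorem \ref{thm:existence}(1) applied to this common $d$ would force $\pi^{(1)}\propto\pi^{(2)}$. I would attempt a crossing argument: set $\rho_i=\pi^{(2)}_i/\pi^{(1)}_i$, assume $\rho$ is non-constant, and let $i_*,i^*$ realize its minimum and maximum. One then compares the four quantities $F_{i_*}(\pi^{(1)})$, $F_{i^*}(\pi^{(1)})$, $F_{i_*}(\pi^{(2)})$, $F_{i^*}(\pi^{(2)})$, using both the equitable equalities ($F_{i_*}=F_{i^*}$ at both $\pi^{(j)}$) and the fact that inflating cohort $i$'s share of $\pi$ simultaneously raises the numerator $\pi_i$ in $F_i$ and shifts the natural $d_\pi$ toward the surviving-share profile of that cohort, so that some weighted comparison of numerator and denominator effects ought to rule out the extremal ratio $\rho_{i^*}/\rho_{i_*}$ being strictly greater than one.

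A parallel route, potentially subsuming existence, is to show that $T$ contracts the Hilbert projective metric $h(\pi,\pi')=\log(\max_i\rho_i/\min_i\rho_i)$ on the positive cone; by Banach's theorem this would yield both existence and uniqueness at once. Because $T$ is defined only implicitly through the nonlinear system of Theorem \ref{thm:existence}, rather than as an explicit positive linear map in Perron--Frobenius form, the Hilbert-metric estimate requires a delicate sensitivity analysis of how equitable rates respond to perturbations of $d$. I expect this (or the analogous monotonicity step in the crossing approach) to be the place where the proof currently stalls, and where additional structural hypotheses on the age spread, investment spread, or pool size may be needed --- matching the authors' decision to leave the full statement as a conjecture.
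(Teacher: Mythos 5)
The statement you are addressing is not proved in the paper at all: the authors explicitly record that they ``have not yet succeeded in finding necessary and sufficient conditions for existence, or in establishing uniqueness,'' and they leave it as a conjecture. So there is no paper proof to compare against, and your proposal does not close the gap either --- it is a programme with the decisive steps left open, as you yourself acknowledge. Concretely, on existence: Brouwer requires a compact \emph{convex} set $\Pi$ on which $T$ is well defined, continuous, and invariant. You have not shown that condition \eqref{typecondition1} holds for $d_\pi$ uniformly over such a $\Pi$ except by appeal to the $n\to\infty$ limit where the proportional tontine is equitable (Proposition \ref{prop:asymptoticoptimality}); ``by continuity this should persist'' is not a statement of the ``broad conditions'' --- identifying those conditions for finite, possibly unbalanced pools is precisely the content of the conjecture, so deferring it leaves the statement unproven. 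Moreover, a sub-level set of the inequity functional has no reason to be convex, invariance $T(\Pi)\subset\Pi$ is asserted rather than argued, and continuity of $T$ does not follow from the uniqueness clause of Theorem \ref{thm:existence} alone: uniqueness gives well-definedness, while continuity would need either a stability argument in the spirit of the paper's $\mathcal{P}_0$ compactification or an implicit-function argument with an invertibility check on the Jacobian of $F$, neither of which you supply.

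On uniqueness, your crossing argument stops exactly where the difficulty begins. The paper's uniqueness proof for Theorem \ref{thm:existence}(a) works because the payout $d(t)$ is held fixed, so the interpolation $\pi(s)$ moves the extremal $F_i$ in a single controllable direction. In the natural-tontine problem a perturbation of $\pi$ changes $d_\pi$ as well, and the two effects pull in opposite directions: inflating $\pi_i$ raises cohort $i$'s share of each disbursement but simultaneously tilts $d_\pi$ toward that cohort's survival profile, redistributing present value across time and hence across cohorts. Nothing in your sketch controls the sign of the combined effect, and the proposed reduction to $d_{\pi^{(1)}}=d_{\pi^{(2)}}$ (after which Theorem \ref{thm:existence}(a) would indeed finish) is itself unjustified; the Hilbert-metric contraction is likewise only a hope, since $T$ is not a positive linear map and no sensitivity estimate is given. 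In short: the fixed-point formulation is a reasonable way to organize an attack, and it is consistent with the paper's asymptotic evidence (Lemma \ref{lem:zeroloading}, Proposition \ref{prop:asymptoticoptimality}), but both the existence and the uniqueness halves remain open after your proposal, just as they do in the paper.
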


\section{Other product designs}
\label{sec:others}
As indicated earlier, there are a number of other product designs in the actuarial literature that hedge the idiosyncratic component of longevity risk but not the systematic component. In this section we discuss some of those alternatives.
\subsection{Pooled Annuity Fund (PAF)}

In the homogeneous setting, the optimal PAF was derived in Stamos (2008), and its utility (or loading in our terminology) compared to a (variable) life annuity was investigated in Donnelly et al (2013). We are not aware of work on such optimal PAF's in the heterogeneous setting, though an approach like that of this paper (i.e. fix a payout mechanism and then allocate shares equitably) could probably be carried out in this context. PAF's in general allow a diversified investment portfolio, but we will consider only risk-free portfolios. In other words, this section treats PAF's invested purely in bonds (at rate $r$) with a homogeneous pool of subscribers. 

A PAF allows the rate $e(t,k,w)$ at which each individual is paid to vary with $t$, but also with the number of survivors $k=N_t$ and with the individual's share $w=W_t$ of total assets under management $\overline{w}=\overline{W}_t$ (so $\overline{W}_t=N_tW_t$ and $\overline{w}=kw$). 
For a given risk-aversion coefficient $\gamma\neq 1$, Stamos (2008) obtains the utility-optimizing payout rates $e$, and shows that they take the form $e(t,k,w)=\eta(t,k)w$ for some function $\eta(t,k)$. The extra flexibility means this provides higher utility than a tontine (where dependence on $k$ and $w$ is not allowed, other than indirectly via the initial number $n$ of subscribers). Table \ref{table09} indeed shows a modest improvement.

It comes at the expense of dealing with a more complex product. For example, a prospectus would have to provide the full table of $\eta(t,k)$'s and that complexity may make it harder for subscribers to understand (and then manage) the risks associated with the product. The payout to an individual will no longer be predictable in terms of the current number $N_t$ of survivors, i.e. it will be path-dependent. A general PAF will also require the fund manager to adjust the portfolio in response to the observed mortality experience of the pool, rather than relying on a static bond portfolio. So there are both advantages and disadvantages to each design. 

To compare with our tontines' payout $d(t)$, we let $\overline{e}(t,k,\overline{w})$ denote the total payout rate, per initial dollar invested. We assume that each individual invests \$1 initially, so $\overline{e}(t,k,\overline{w})=\eta(t,k)\overline{w}/n$.

An observation, that we have not seen recorded in the literature, is that when $\gamma=1$ (in the setting described above), the total withdrawals from the optimal PAF become deterministic. In other words, this PAF is a tontine (and therefore must be the natural tontine):
\begin{proposition} Assume a homogeneous pool, with assets invested risk-free at rate $r$. Assume logarithmic utility. Then the optimal PAF has $\overline{e}(t,N_t,\overline W_t)=\frac{1}{a_x}{}_tp_x$.
\label{prop:PAFandtontine}
\end{proposition}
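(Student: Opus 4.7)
The plan is to invoke the Stamos form $e(t,k,w)=\eta(t,k)w$ and directly optimize expected logarithmic utility, exploiting the symmetry of the homogeneous pool. Between deaths the fund earns interest at rate $r$ while being drained at total rate $N_t\eta(t,N_t)W_t=\eta(t,N_t)\overline{W}_t$ (since by symmetry each survivor holds $W_t=\overline{W}_t/N_t$); deaths themselves leave $\overline{W}$ continuous, so
\[
\overline{W}_t = n\exp\!\Big(\int_0^t (r-\eta(s,N_s))\,ds\Big).
\]
Note that $\overline{W}_t$ is already a deterministic functional of the path of $N$.

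By symmetry and additivity of log utility, maximizing each subscriber's expected utility is equivalent to maximizing the pool's total utility
\[
\mathcal{U}(\eta)=E\!\left[\int_0^\infty e^{-rt}N_t\log\!\big(\eta(t,N_t)\overline{W}_t/N_t\big)\,dt\right].
\]
I would expand the logarithm, substitute the formula for $\log\overline{W}_t$, and discard the terms $\log n+rt-\log N_t$ which do not depend on $\eta$. The surviving objective reduces, after Fubini on the double integral and using the Markov identity $E[N_t\mid\mathcal{F}_s]=N_s\,{}_tp_x/{}_sp_x$ together with the elementary computation $\int_s^\infty e^{-rt}\,{}_tp_x/{}_sp_x\,dt=e^{-rs}a_{x+s}$, to
\[
\int_0^\infty e^{-rs}E\!\Big[N_s\big\{\log\eta(s,N_s)-a_{x+s}\eta(s,N_s)\big\}\Big]\,ds.
\]
Pointwise maximization in $(s,k)$ of $k\log\eta-ka_{x+s}\eta$ then yields $\eta(s,k)=1/a_{x+s}$, \emph{crucially independent of $k$}.

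The final step is to verify that the resulting total payout per initial dollar $\overline{e}(t,N_t,\overline{W}_t)=\eta(t,N_t)\overline{W}_t/n=\overline{W}_t/(na_{x+t})$ equals $\frac{1}{a_x}\,{}_tp_x$. Differentiating the identity $e^{-rs}\,{}_sp_x\,a_{x+s}=\int_s^\infty e^{-rt}\,{}_tp_x\,dt$ in $s$ yields the ODE $a'_{x+s}=(r+\lambda_{x+s})a_{x+s}-1$, and a direct check shows its logarithmic derivative coincides with that of $({}_tp_x\,a_{x+t})/a_x$. Since both expressions equal $1$ at $t=0$, one concludes $\overline{W}_t/n=({}_tp_x\,a_{x+t})/a_x$, giving $\overline{e}={}_tp_x/a_x$, which is exactly the natural tontine payout for the age-$x$ cohort.

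The main obstacle is the Fubini-and-conditioning step: one must correctly handle $E[N_t\cdot\eta(s,N_s)]$ for $t>s$ by appealing to the Markov structure of the pure-death process $N$ and the independence of lifetimes within a homogeneous cohort, to justify $E[N_t\mid\mathcal{F}_s]=N_s\,{}_tp_x/{}_sp_x$. Once that reduction and the annuity identity $\int_s^\infty e^{-rt}\,{}_tp_x/{}_sp_x\,dt=e^{-rs}a_{x+s}$ are in hand, both the pointwise optimization producing the $k$-independent $\eta$ and the subsequent collapse of $\overline{e}$ to the deterministic value $\frac{1}{a_x}\,{}_tp_x$ are essentially mechanical.
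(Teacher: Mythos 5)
Your proof is correct, but it takes a genuinely different route from the paper's. The paper follows Stamos (2008): it writes down the HJB equation for the individual value function $v(t,k,w)$, uses scale invariance to posit $v(t,k,w)=a_{x+t}\log w+v(t,k,1)$, and reads off $\eta=1/(wv_w)=1/a_{x+t}$ from the first-order condition, then solves the resulting deterministic ODE for $\overline{W}_t$. You instead bypass dynamic programming entirely: starting from the same Stamos reduction $e(t,k,w)=\eta(t,k)w$, you optimize the aggregate expected log-utility globally, using $\log\overline{W}_t=\log n+\int_0^t(r-\eta(s,N_s))\,ds$, Tonelli/Fubini on the double integral, and the conditioning identity $E[N_t\mid\mathcal{F}_s]=N_s\,{}_tp_x/{}_sp_x$ together with $\int_s^\infty e^{-rt}\,{}_tp_x/{}_sp_x\,dt=e^{-rs}a_{x+s}$, so that the control appears linearly against the annuity factor and pointwise maximization of $k\log\eta-ka_{x+s}\eta$ gives $\eta(s,k)=1/a_{x+s}$, independent of $k$; the closing ODE verification that $\overline{W}_t/n={}_tp_x\,a_{x+t}/a_x$ coincides with the paper's. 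What your approach buys is self-containedness (no HJB, no ansatz for the value function, only a concavity/first-order argument), and in fact your Fubini-and-conditioning step only uses adaptedness of the payout proportion, so it establishes optimality of $\eta_s=1/a_{x+s}$ over all adapted proportional-withdrawal strategies rather than just Markov feedback rules; what the paper's approach buys is brevity by leaning on the already-published HJB derivation in Stamos (2008), of which this proposition is the $\gamma=1$ specialization. All the individual steps you flag (equal shares $W_t=\overline{W}_t/N_t$ in a homogeneous pool, equivalence of individual and pooled objectives by symmetry, discarding the $\eta$-free terms, and the Thiele-type identity $a'_{x+s}=(r+\lambda_{x+s})a_{x+s}-1$) check out.
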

We give the proof in the appendix. To give a sense of the differences, Figure \ref{fig03} simulates both total payouts $d(t)$ and $\overline{e}(t,N_t,\overline{W}_t)$, and individual payouts $d(t)/N_t$ and $e(t,N_t,W_t)$, for $\gamma=5$, $x=65$, and a small pool of size $n=10$. The utility improvement from the PAF appears to derive from a modest reduction in the volatility of individual payouts. With a larger pool of size $n=100$, the difference in either total or individual payouts (not shown) become negligible, except at quite advanced ages.
$$
\text{\bf Insert Figures \ref{fig03} and \ref{fig04} here.}
$$

\subsection{Group Self Annuitization (GSA)}

The GSA scheme was proposed in Piggott, Valdez, and Detzel (2005). It provides a rule for managing payments from a pool of assets, so does not depend on risk aversion, nor does it attempt to optimize utility. On the other hand, it allows for a heterogeneous pool, and variable asset returns, though as in other sections of this paper, we will focus here on the case of a fixed rate of return $r$. 

The general GSA scheme also allows new individuals to join over at times $t>0$, by valuing each survivor's share of current assets, and then allowing new individuals to buy in at an actuarially fair price.  A similar approach could be used to add this feature to the tontines considered in the current paper, but we do not pursue this idea here. We will therefore treat only GSAs in which all investors buy-in at time 0. 

A GSA scheme works as follows, in discrete time: members of the $i$th cohort each contribute $w_i$ at time 0. This entitles them to an initial payment  matching that an annuity would provide. At later times $t_k$, everyone's payment is adjusted up or down by a common annuity factor $M_k$ chosen so that if realized mortality were to match expected mortality thereafter, no further adjustment would be required. In symbols, survivors from the $i$th cohort receive $g_{i,k}$ at time $t_k=k\,\Delta t$. Initially $g_{i,0}=w_i/\dot{a}_{x_i}$ where $\dot{a}_x$ is the discrete annuity price
for \$1 each period $\Delta t$, i.e. $\dot{a}_x=\sum_{k=0}^\infty e^{-rt_k}{}_{t_k}p_x$.  Later, $g_{i,k}=M_kg_{i,0}$ where $\sum_i g_{i,k}N_i(t_k)\dot{a}_{x_i+t_k}=W_k$ and $W_k$ is the wealth at time $t_k$. The latter is determined recursively as $W_{k+1}=(W_k-g_k)e^{r\Delta t}$ where $g_k=\sum_ig_{i,k}N_i(t_k)$ is the total disbursed at time $t_k$. 

Suppose first that all cohorts have the same age $x_1$, but possibly different initial contributions $w_i$. Then $g_k=\frac{W_k}{\dot{a}_{x_1+t_k}}$ so the above recurrence implies that $g_k$ and $W_k$ are both deterministic (i.e. tontine-like). We have not seen this conclusion recorded in the literature. Moreover 
$g_{k+1}=\frac{W_{k+1}}{\dot{a}_{x_1+t_{k+1}}}=\frac{W_k-g_k}{\dot{a}_{x_1+t_{k+1}}}e^{r\Delta t}=g_k\frac{\dot{a}_{x_1+t_k}-1}{\dot{a}_{x_1+t_{k+1}}}e^{r\Delta t}$. From a standard actuarial recursion, this implies that $g_{k+1}=g_k \,{}_{\Delta t}p_{x_1+t_k}$, from which we obtain
\begin{proposition}
Assume a pool whose cohorts have the same initial age $x_1$ and contribute $w_i$ at time 0. Assume that assets are invested risk-free at rate $r$. Then the total GSA payout is $g_k=\frac{\sum n_jw_j}{\dot{a}_{x_1}}\,{}_{t_k}p_{x_1}$, while individual payouts are $g_{i,k}= \frac{g_kw_i}{\sum_j w_jN_j(t_k)}$. 
\end{proposition}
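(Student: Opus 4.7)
The bulk of the work is already carried out in the paragraph preceding the proposition, where it is established that $W_k$ and $g_k$ are deterministic and that $g_{k+1} = g_k\,{}_{\Delta t}p_{x_1+t_k}$. My plan is to complete the argument in three short steps: nail down the initial condition, iterate the one-step recurrence to obtain the claimed closed form for $g_k$, and then use the definition of the adjustment factor $M_k$ to extract the individual payout $g_{i,k}$.

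First I would compute $g_0$ directly. Since everyone in every cohort begins at age $x_1$, $g_{i,0} = w_i/\dot{a}_{x_1}$ for each $i$, so summing over all live participants at time $0$ gives
\[
g_0 = \sum_i n_i g_{i,0} = \frac{\sum_j n_j w_j}{\dot{a}_{x_1}}.
\]
Then I would iterate the recursion $g_{k+1} = g_k\,{}_{\Delta t}p_{x_1+t_k}$ (already derived in the text via the actuarial identity $\dot{a}_y - 1 = e^{-r\Delta t}\,{}_{\Delta t}p_y\,\dot{a}_{y+\Delta t}$). A telescoping product gives
\[
g_k = g_0 \prod_{j=0}^{k-1}{}_{\Delta t}p_{x_1+t_j} = g_0\,{}_{t_k}p_{x_1} = \frac{\sum_j n_j w_j}{\dot{a}_{x_1}}\,{}_{t_k}p_{x_1},
\]
which is the first claim.

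For the individual payouts, recall $g_{i,k} = M_k g_{i,0} = M_k w_i/\dot{a}_{x_1}$, where $M_k$ is the common adjustment factor. Summing the individual payouts against the random surviving counts,
\[
g_k = \sum_i g_{i,k} N_i(t_k) = \frac{M_k}{\dot{a}_{x_1}}\sum_j w_j N_j(t_k),
\]
so $M_k = g_k \dot{a}_{x_1}/\sum_j w_j N_j(t_k)$. Substituting back yields $g_{i,k} = g_k w_i/\sum_j w_j N_j(t_k)$, which is the second claim. The only potential subtlety, and the place where I would be most careful, is the implicit consistency between the deterministic total $g_k$ and the random individual shares: the common-age assumption is precisely what makes the wealth-to-annuity-factor ratio $W_k/\dot{a}_{x_1+t_k}$ collapse to a pool-level quantity that does not depend on how survivors are distributed across cohorts, so the randomness cleanly factors out into the $w_i/\sum_j w_j N_j(t_k)$ sharing rule. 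No further induction or inequality is required.
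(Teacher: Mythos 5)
Your proposal is correct and follows essentially the same route as the paper, which derives the deterministic recursion $g_{k+1}=g_k\,{}_{\Delta t}p_{x_1+t_k}$ in the preceding paragraph and leaves the remaining steps implicit; you simply make explicit the initial condition $g_0=\sum_j n_jw_j/\dot{a}_{x_1}$, the telescoping to ${}_{t_k}p_{x_1}$, and the $M_k$ algebra yielding $g_{i,k}=g_kw_i/\sum_j w_jN_j(t_k)$. No gaps.
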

If we now take limits as $\Delta t\downarrow 0$, we obtain a natural tontine (for age $x_1$) that pays out continuously at rate $\frac{\sum n_jw_j}{a_{x_1}} {}_tp_{x_1}$. In the language we introduced earlier, the participation rates for this tontine are all $\pi_i=1$, so it is not equitable (except in the homogeneous case of a single cohort). This is consistent with the results of Donnelly (2015) who shows that a GSA scheme is only fair in the homogeneous case. As noted earlier, to achieve fairness, she has to include payments to estates. In this case, during the final period when anyone is alive, all who die are deemed to receive the remaining assets distributed in the proportions set by the GSA rules.

In the fully heterogeneous case (variability in both initial ages and initial investments), there is no reason that either wealth or total withdrawals should be deterministic. In other words, the GSA is no longer tontine-like. It turns out that the appropriate tontine to compare with in this case is what we have called a proportional tontine. To start with, the GSA payouts (resp. proportional tontine payout rates) are always proportional to the initial payout $g_{i,0}=\frac{w_i}{\dot{a}_{x_i}}$ (resp. initial payout rate $\frac{w_i}{{a}_{x_i}}$). In fact, any deviation between the GSA and proportional tontine payouts derives either from the discrete-time formulation of the GSA, or from deviations of the survival counts $N_i(t)$ from their means -- it can be shown (though we will not give details) that if each $N_i(t_k)$ agreed with $n_i\times{}_{t_k}p_{x_i}$, then in the limit as $\Delta t\downarrow 0$, the two cash flow streams would agree precisely. More concretely, as Figure \ref{fig05} shows, the actual payment streams are quite close even for small $n$.
$$
\text{\bf Insert Figure \ref{fig05} here.}
$$
\subsection{Other designs}
Donnelly, Guillen and Nielsen (2014) formulate another design, known as an {\it annuity overlay fund } (AOF), as a way of pooling individual investment accounts in order to capture mortality credits. The overlay pays out the assets of individuals who die in a period, in proportion to all who belong to the pool at the beginning of the period (including those who die). It does so in proportion to both the individual's assets at risk in the pool, and to the individual's hazard rate. 

The AOF is designed to work with arbitrary investment and withdrawal decisions by participants, so it seeks to achieve actuarial fairness over every period, as opposed to simply over the lifecycle. This is a very different objective than that of a tontine or annuity, whose goal is providing stable lifetime income. We should therefore not expect the two designs to behave similarly. 

Another design is the {\it fair transfer plan} (FTP) of Sabin (2010). Only living participants receive payments. So (as in the current paper) Sabin's goal is not actuarial fairness, but rather to ensure that no individual has an advantage over another (i.e. what we call equitability). On the other hand, he requires this to be achieved over every period (as in Donnelly et al (2014)), rather than once over the lifecycle, which means that his design is not comparable to ours. We note that he does obtain necessary and sufficient conditions for the existence of an equitable FTP, in his context. 

\section{Conclusion}
\label{sec:conclusion}

There is a growing interest among practitioners and academics in the optimal design of retirement de-accumulation products that insure against idiosyncratic longevity risk while sharing aggregate exposure within a group. In this paper we investigated the design of a retirement income tontine scheme that allows individuals with different mortality rates to participate in the same pool. And while this scheme might not be actuarially {\em fair}, in the sense of Donnelly (2015), this scheme is {\em equitable} in that the scheme does not discriminate against any particular sub-group and all participants receive the exact same expected present value of benefits. And although alternative designs can sometimes provide somewhat greater utility than a tontine, the retirement income tontine has the advantage of transparency, simplicity and requiring little if any actuarial expertise to operate. It pays a reasonably steady and predictable cash flow to a declining group of survivors. It is also simpler to analyze qualitatively, and leads to interesting mathematical properties and insights. 

The structure we introduce in this paper -- which is an extension of Milevsky and Salisbury (2015) -- allows anyone of any age to participate in the scheme by adjusting the price of a tontine share to be a function of (i.) the number of investors, (ii.) their ages, and (iii.) the capital they have invested. In Lorenzo Tonti's original scheme, as well as the structure proposed in Milevsky and Salisbury (2015), all investors (in the same pooling class) were assumed to be of the same age and paid the same price. When smaller groups were segmented into age bands they lost the benefit of large numbers. In this paper we have proved that it is possible to mix cohorts without discriminating provided the diversity of the pool satisfies certain dispersion conditions and we propose a specific design that appears to work well in practice.

Finally, this paper provides a detailed comparison of the various mortality pooling schemes that have been proposed in the literature as well as the conditions under which they all collapse into a tontine-like structure. Indeed, regardless of what they are called in practice they all do seem to share a common ancestor. 

\newpage

\newpage


\begin{table}
\begin{center}
\begin{tabular}{||c||c|c||c|c||}
\hline\hline
\multicolumn{5}{||c||}{\textbf{Certainty equivalents for \$100}} \\ \hline\hline
$n$ & 10 & 100 & 10 & 100\\ \hline\hline
& \multicolumn{2}{|c||}{$\gamma =0.5$} & \multicolumn{2}{|c||}{$\gamma =1$}  \\ \hline
A & 101.53 & 100.15 & 102.68 & 100.28\\ \hline
B & 101.55 & 100.15 & 102.68 & 100.28\\ \hline
C & 101.67 & 100.15 & 102.68 & 100.28\\ \hline\hline
& \multicolumn{2}{|c||}{$\gamma =2$} & \multicolumn{2}{|c||}{$\gamma =5$}  \\ \hline
A & 104.62 & 100.53 & 109.20 & 101.22 \\ \hline
B & 104.65 & 100.53 & 109.47 & 101.24 \\ \hline\hline
\multicolumn{5}{||c||}{\footnotesize Assumes $r=4\%$ and Gompertz Mortality} \\ 
\multicolumn{5}{||c||}{\footnotesize   ($m=88.72,b=10$). Homogeneous pool }\\
\multicolumn{5}{||c||}{\footnotesize of size $n$, with initial age 65. } \\ 
\hline\hline
\end{tabular}\medskip
\caption{Shows the amount invested in three products needed to yield the same utility as \$100 invested in a life annuity guaranteed by an insurance company. Product designs are:
\newline A = Pooled Annuity fund, $\gamma$-optimized; Donnelly, Guillen and Nielsen (2013)
\newline B = Tontine, $\gamma$-optimized; Milevsky and Salisbury (2015)
\newline C = Group Self Annuitization Scheme; Piggott, Valdez and Detzel (2005) }
\label{table09}
\end{center}
\end{table}

\hbox{ }

\newpage

\begin{figure}[h]
\begin{center}
\includegraphics[width=0.85\textwidth]{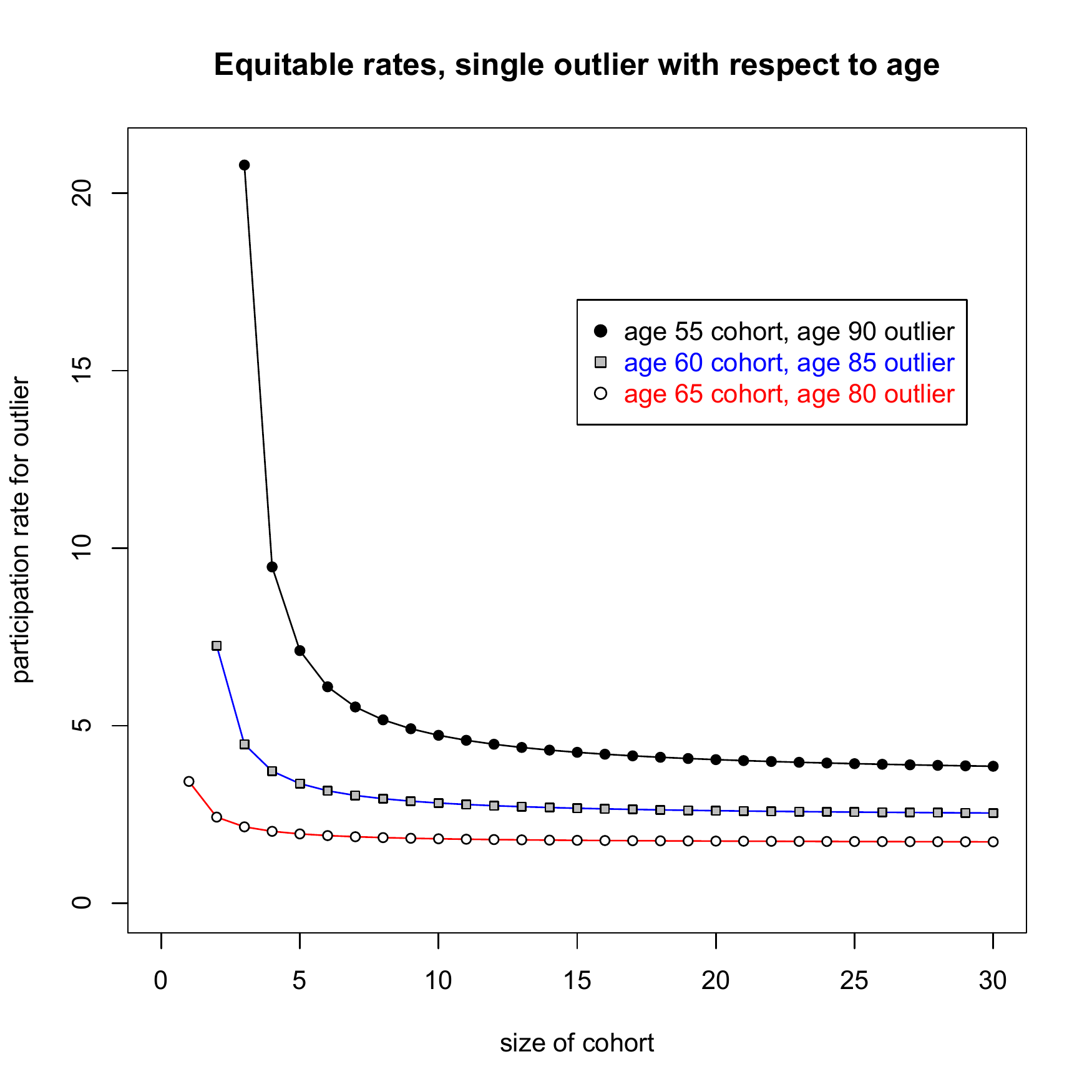} 
\caption{Shows the equitable participation rate $\pi_2$ versus the size $n_1$ of the cohort with age $x_1$, in the presence of a single outlier ($n_2=1$) with age $x_2$. Tontine is natural for the age $x_1$ cohort, and each individual invests \$1 ($w_1=w_2=1$). Normalized so $\pi_1=1$. Assumes Gompertz Mortality ($m=88.72,b=10$) and $r=4\%$.}
\label{fig1}
\end{center}
\end{figure}

\hbox{ }

\newpage

\begin{figure}[h]
\begin{center}
\includegraphics[width=0.85\textwidth]{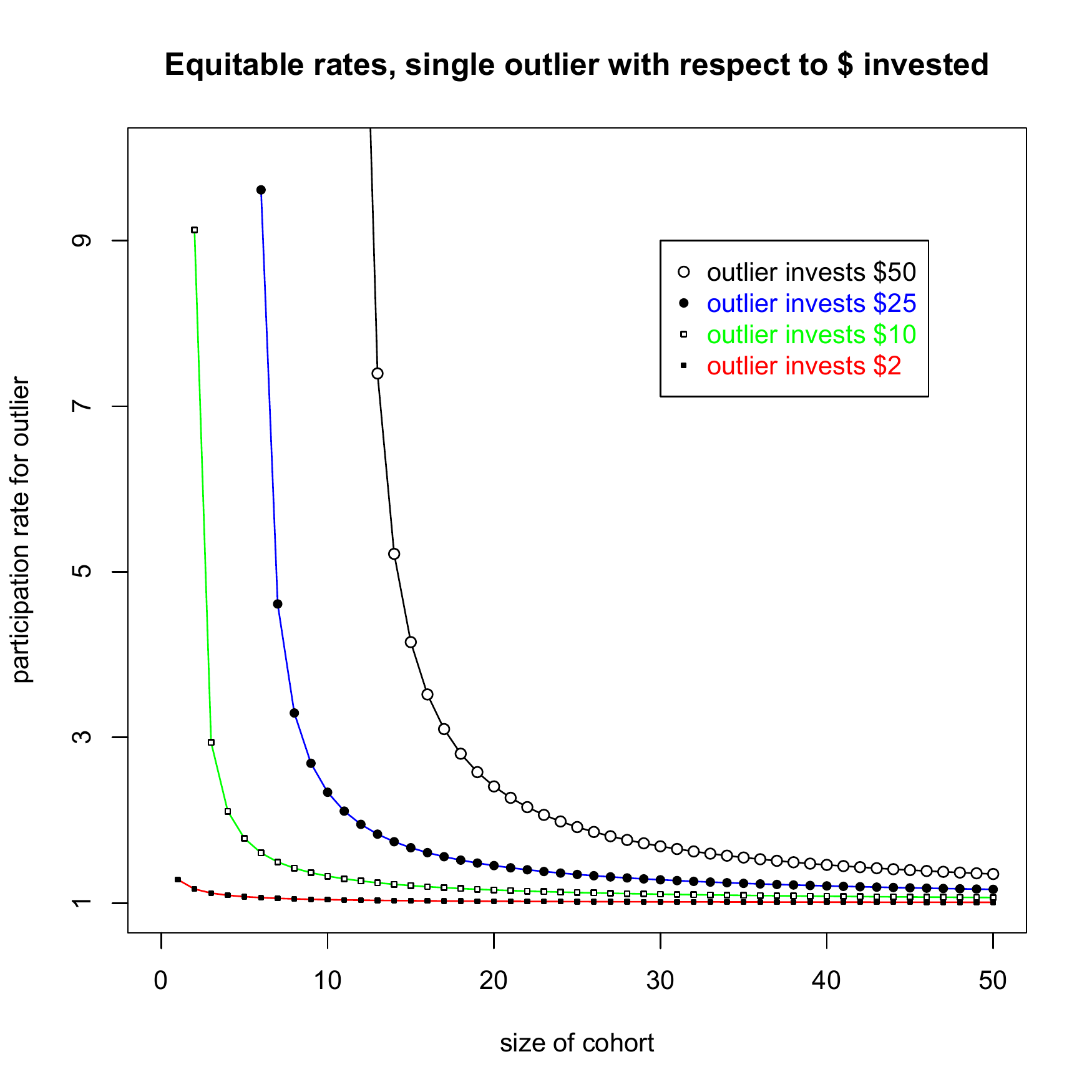} 
\caption{Shows the equitable participation rate $\pi_2$ versus the size $n_1$ of the cohort who invest 1 dollar ($w_1=1$) each, in the presence of a single outlier who invests $w_2$ dollars. All subscribers are the same age ($x_1=x_2=65$), and the tontine is natural for that age. Normalized so $\pi_1=1$. Assumes Gompertz Mortality ($m=88.72,b=10$) and $r=4\%$.}
\label{fig2}
\end{center}
\end{figure}

\hbox{ }

\begin{table}
\begin{center}
\begin{tabular}{||c||c|c|c||c|c|c||}
\hline\hline
\multicolumn{7}{||c||}{\textbf{Equitable rates and loadings in pools  }} \\ 
\multicolumn{7}{||c||}{\textbf{ with $K=2$ cohorts: $n_1=n_2$ }} \\ \hline\hline
  & age 65 & \multicolumn{2}{|c||}{ age 75} & age 65 & \multicolumn{2}{|c||}{ age 75} \\ \hline\hline
  & $\delta_1$ & $\delta_2$ & $\pi_2$ & $\delta_1$ & $\delta_2$ & $\pi_2$ \\ \hline\hline
  & \multicolumn{3}{|c||}{ $n_1=1=n_2$} & \multicolumn{3}{|c||}{ $n_1=50=n_2$} \\ \hline
A & -235.4 & -2604.4 & 1.829 & 239.4 & 30.0 & 1.501  \\ \hline
B & -495.0 & -2819.3 & 1.631 & -3.7 & -69.8 & 1.375 \\ \hline
C & -1266.7 & -2012.0 & 1.370 & -20.6 & -52.9 & 1.370 \\ \hline
D & 277.7 & -2759.3 & 1.506 & 696.1 & 74.3 & 1.265 \\ \hline\hline
  & \multicolumn{3}{|c||}{ $n_1=5=n_2$} & \multicolumn{3}{|c||}{ $n_1=500=n_2$}\\ \hline
A & 177.7 & -496.8 & 1.550 & 240.0 & 92.8 & 1.495 \\ \hline
B & -69.7 & -612.3 & 1.413 & -0.22 & -7.7 & 1.371 \\ \hline
C & -219.9 & -458.7 & 1.370 & -2.0 & -5.9 & 1.370 \\ \hline
D & 646.5 & -485.6 & 1.302 & 700.2 & 135.7 & 1.262 \\ \hline\hline
  & \multicolumn{3}{|c||}{ $n_1=10=n_2$} & \multicolumn{3}{|c||}{ $n_1=n_2\to\infty$}\\ \hline
A & 218.4 & -213.3 & 1.523 & 239.7 & 100.7 & 1.494 \\ \hline
B & -28.9 & -317.9 & 1.392 & 0 & 0 &  1.370 \\ \hline
C & -106.3 & -239.5 & 1.370 & 0 & 0 & 1.370 \\ \hline
D & 676.4 & -179.5 & 1.281 & 700.7 & 143.2 & 1.261 \\ \hline\hline
\multicolumn{7}{||c||}{\footnotesize Assumes $r=4\%$, Gompertz Mortality ($m=88.72,b=10$);} \\ 
\multicolumn{7}{||c||}{\footnotesize $\delta_i$ are given in b.p.; Rates are normalized so $\pi_2=1$} \\ \hline\hline
\end{tabular}\medskip
\caption{Shows the participation rates $\pi_2$ (= inverse of the share price) and corresponding utility loadings $\delta_1,\delta_2$ when there are two cohorts of subscribers in the pool: ages $x_1=65$ and $x_2=75$. Utility is logarithmic. Everyone invests 1 dollar ($w_1=w_2=1$). Tontine designs are: 
\newline A = Natural tontine based on age 65 cohort alone, equitable rates;
\newline B = Natural tontine based on the range of ages, equitable rates;
\newline C = Proportional tontine;
\newline D = Natural tontine based on age 75 cohort alone, equitable rates;}
\label{table03}
\end{center}
\end{table}

\begin{table}
\begin{center}
\begin{tabular}{||c||c|c|c||c|c|c||}
\hline\hline
\multicolumn{7}{||c||}{\textbf{Participation rates and utility loadings in pools}} \\
\multicolumn{7}{||c||}{\textbf{with three cohorts: $2n_1=n_2=2n_3$}} \\ \hline\hline
& age 60 & age 65 & age 70 & age 60 & age 65 & age 70  \\ \hline\hline
& $\pi_1$ & $\pi_2$ & $\pi_3$ & $\delta_1$ & $\delta_2$ & $\delta_3$  \\ \hline\hline
& $n_1=5$  & $n_2=10$  & $n_3=5$  & $n_1=5$  & $n_2=10$  & $n_3=5$   \\ \hline
A & 0.886  & 1 & 1.161 & -186.9 & -136.1 & -594.3 \\ \hline
B & 0.884  & 1 &  1.161 & -216.0 & -136.6 & -586.8\\ \hline
C & 0.889 & 1 & 1.153 & -275.0 & -138.7 & -586.8 \\ \hline\hline
& $n_1=10$  & $n_2=20$  & $n_3=10$  & $n_1=10$  & $n_2=20$  & $n_3=10$   \\ \hline
A & 0.889 & 1 & 1.157 & -79.4 & -68.9 & -301.0 \\ \hline
B & 0.887 & 1 & 1.157 & -102.9 & -70.4 & -297.2 \\ \hline
C & 0.889 & 1 & 1.153 & -133.3 & -71.3 & -264.5 \\ \hline\hline
& $n_1=20$  & $n_2=40$  & $n_3=20$  & $n_1=20$  & $n_2=40$  & $n_3=20$   \\ \hline
A & 0.890 & 1 & 1.155 & -29.8 & -20.8 & -153.3 \\ \hline
B & 0.888 & 1 & 1.155 & -49.7 & -23.0 & -151.8 \\ \hline
C & 0.889 & 1 & 1.153 & -65.4 & -23.4 & -135.1 \\ \hline\hline
\multicolumn{7}{||c||}{\footnotesize Assumes $r=4\%$ and Gompertz Mortality ($m=88.72,b=10$);} \\ 
\multicolumn{7}{||c||}{\footnotesize $\delta_i$ are given in b.p.; Rates are normalized so $\pi_2=1$} \\ \hline\hline
\end{tabular}\medskip
\caption{Shows the participation rates $\pi_1, \pi_2, \pi_3$ (which are the inverse of the share prices) and corresponding utility loadings $\delta_1,\delta_2,\delta_3$ when there are three cohorts of subscribers in the pool: ages $x_1=60$, $x_2=65$, and $x_3=70$. Utility is logarithmic. Everyone invests 1 dollar ($w_1=w_2=w_3=1$). Tontine designs are: 
\newline A = Natural tontine based on age 65 cohort alone, equitable rates;
\newline B = Natural tontine based on the range of ages, equitable rates;
\newline C = Proportional tontine.
}
\label{table04}
\end{center}
\end{table}
\newpage

\begin{figure}[h]
\begin{center}
\includegraphics[width=0.85\textwidth]{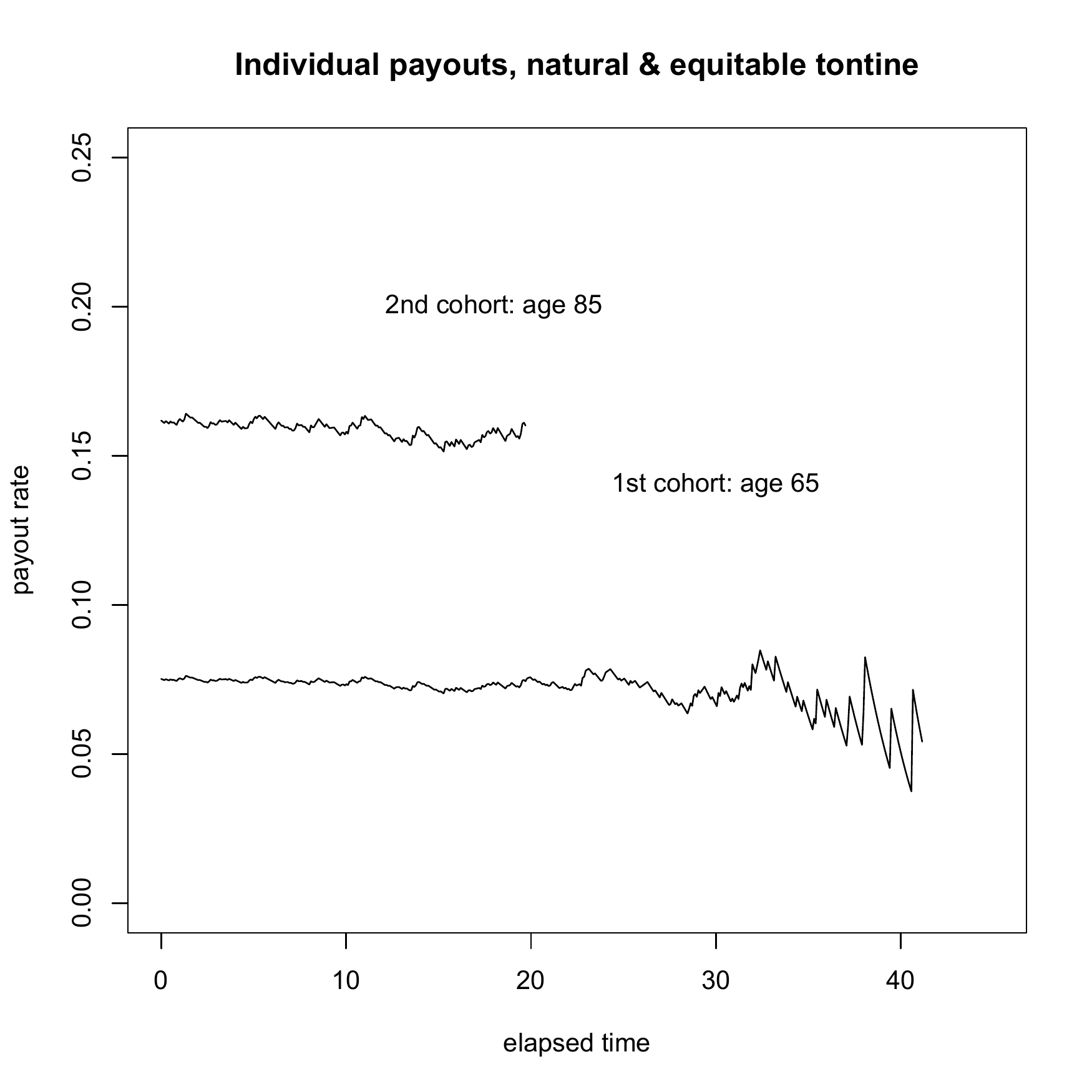} 
\caption{One path of simulated individual payout rates for a natural and equitable tontine with 2 cohorts. First cohort has $n_1=200$ individuals of age $x_1=65$ and second cohort has $n_2=50$ individuals of age $x_2=85$. All individuals invest 1 dollar ($w_1=w_2=1$) but receive an income depending on their group. Simulation assumes Gompertz Mortality ($m=88.72,b=10$) and $r=4\%$.}
\label{fig03}
\end{center}
\end{figure}

\hbox{ }

\newpage

\begin{figure}[h]
\begin{center}
\includegraphics[width=0.4\textwidth]{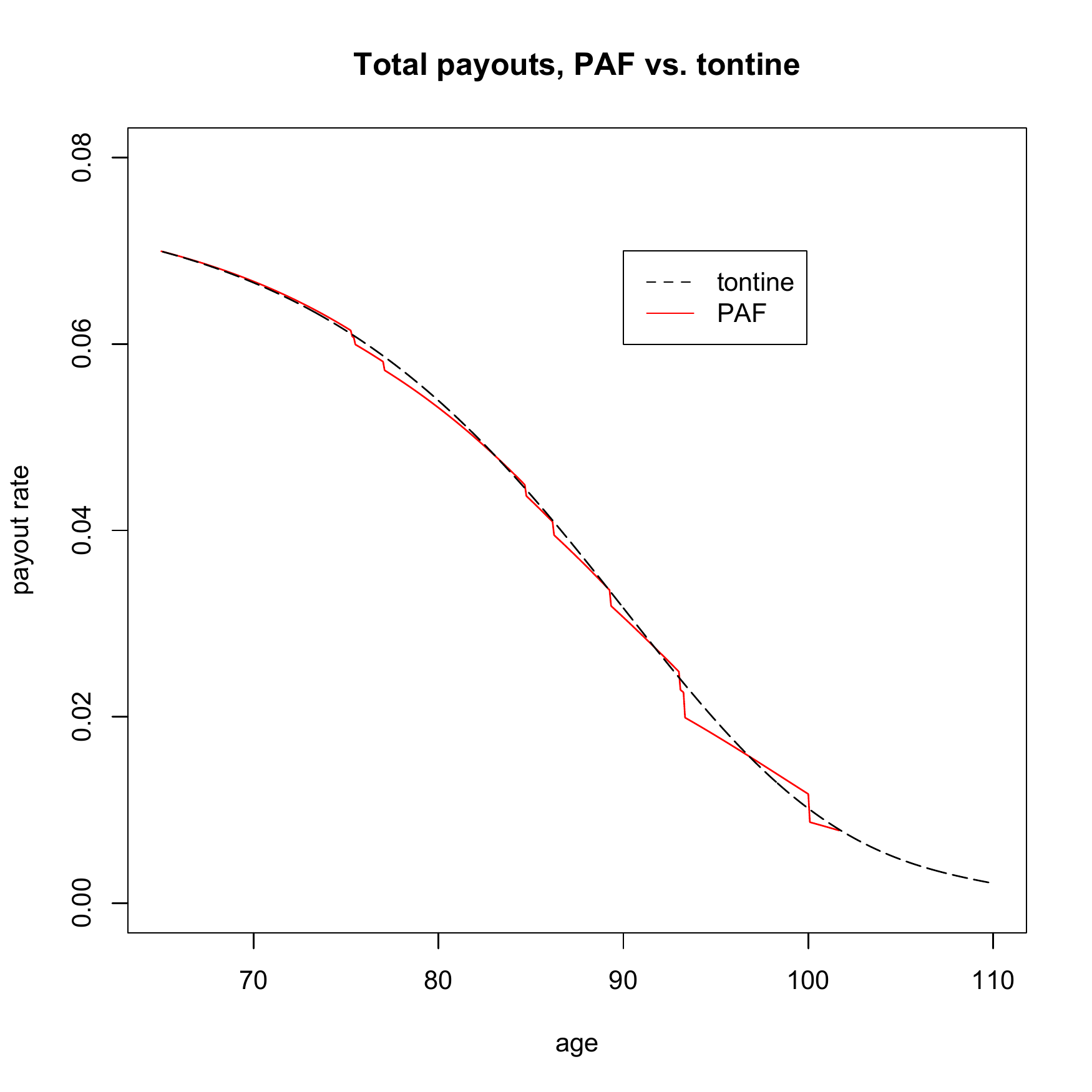} 
\includegraphics[width=0.4\textwidth]{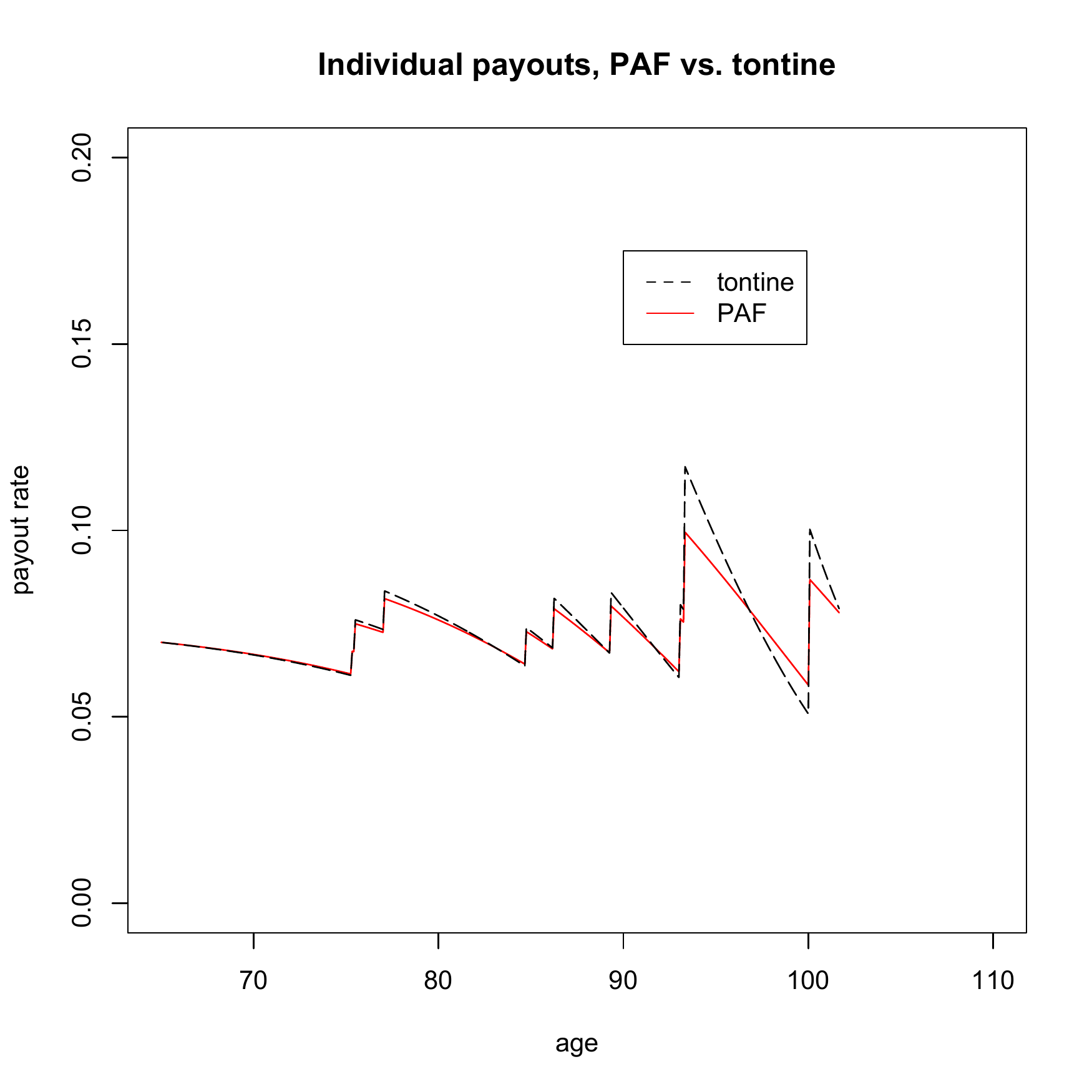} 
\caption{Shows simulated total (left) and individual (right) payout rates for a homogeneous population consisting of 10 individuals age 65. Tontine and PAF are both optimal for risk aversion $\gamma=5$. Total planned tontine payout is shown through age 110; other plots cease upon last death. Assumes Gompertz Mortality ($m=88.72,b=10$) and $r=4\%$.}
\label{fig04}
\end{center}
\end{figure}

\hbox{ }

\newpage

\begin{figure}[h]
\begin{center}
\includegraphics[width=0.85\textwidth]{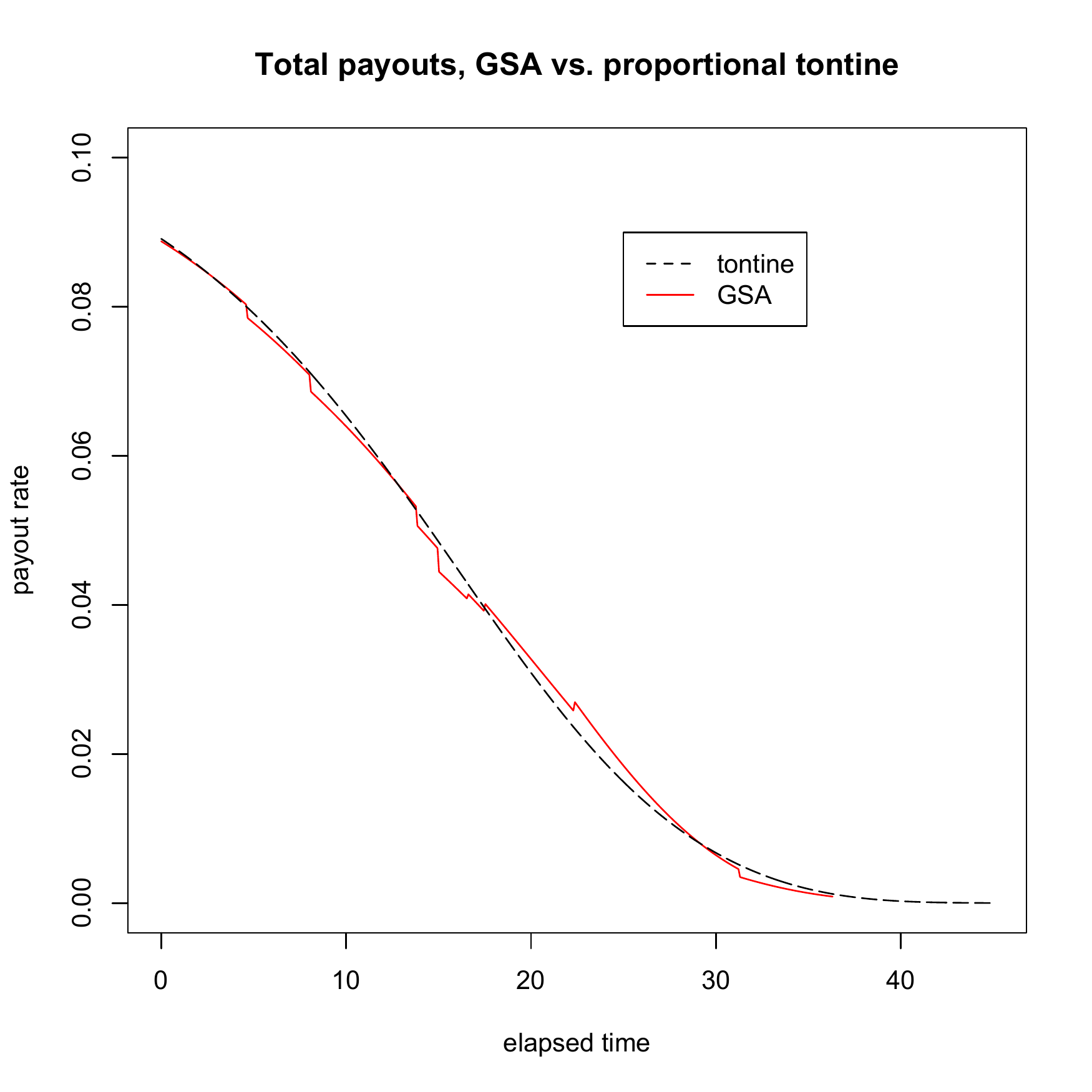} 
\caption{Shows simulated total payout rate for a proportional tontine and a GSA (with monthly payouts). Represents 2 cohorts, ages $x_1=65$ and $x_2=75$, sizes $n_1=n_2=5$, with all individuals investing 1 dollar ($w_1=w_2=1$). Assumes Gompertz Mortality ($m=88.72,b=10$) and $r=4\%$.}
\label{fig05}
\end{center}
\end{figure}

\hbox{ }

\newpage

\section{Appendix: proofs and computations}
\label{sec:appendix}

\subsection{Proof of Theorem \ref{thm:existence}}

We will assume, for now, that each cohort consists of a single individual, i.e. $n=K$, each $n_i=1$, and the $N_i(t)$ are Bernoulli. 
There will be no loss of generality in doing so, as far as the proof of uniqueness goes. To see this, simply split the cohorts up. For existence however, an additional argument will then be needed to establish the sufficiency of \eqref{typecondition1}. In fact, a short proof of necessity could be extracted from the proof of Lemma \ref{individualexistence} below (see formula \eqref{eqn:necessity}). The more complex structure we develop below is needed principally for sufficiency.

Let $\mathcal{P}=\{(\pi_1,\dots,\pi_n): \text{$0< \pi_i\le 1$ for each $i$, and $\max_i \pi_i=1$}\}$. Scaling $\pi$ does not affect the $F_i$, so every value of $F=(F_1,\dots,F_n)$ can be realized with some $\pi\in\mathcal{P}$. We start with the uniqueness question.

\begin{proof}[Proof of (a) of Theorem \ref{thm:existence}]

Suppose $\pi\neq\tilde\pi$ are both equitable, and not multiples of each other. Interpolate between them using $\pi(s)=s\pi+(1-s)\tilde\pi$. Then 
\begin{align}
\frac{d}{ds}F_i(\pi(s))
&=\int_0^\infty e^{-rt}{}_tp_{x_i}wd(t)E_i\Big[\frac{d}{ds}\frac{\pi_i(s) }{\sum_{j} \pi_j(s) w_jN_j(t)}\Big]\,dt \nonumber\\
&=\int_0^\infty e^{-rt}{}_tp_{x_i}wd(t)E_i\Big[\frac{\sum_{j}[\pi_i'(s)\pi_j(s)-\pi_i(s)\pi_j'(s)]w_jN_j(t) }{(\sum_{j} \pi_j(s) w_jN_j(t))^2}\Big]\,dt.
\label{eqn:perturb}
\end{align}
Moreover
\begin{multline*}
\pi_i'(s)\pi_j(s)-\pi_i(s)\pi_j'(s)
=(\pi_i-\tilde\pi_i)\Big[s(\pi_j-\tilde\pi_j)+\tilde \pi_j\Big] -\Big[(s(\pi_i-\tilde\pi_i)+\tilde\pi_i\Big](\pi_j-\tilde\pi_j)\\
=(\pi_i-\tilde\pi_i)\tilde \pi_j-(\pi_j-\tilde\pi_j)\tilde\pi_i
=\pi_i\tilde\pi_j-\pi_j\tilde\pi_i=\pi_i\pi_j\Big(\frac{\tilde\pi_j}{\pi_j}-\frac{\tilde\pi_i}{\pi_i}\Big)
\end{multline*}
for each $s\in[0,1]$. 
In particular, if we choose $i$ to minimize $\frac{\tilde\pi_i}{\pi_i}$, it follows that this expression is $\ge 0$ for every $j$, and $>0$ for some $j$ (since $\tilde\pi$ is not a multiple of $\pi$). Therefore for this choice of $i$ we have that $\frac{d}{ds}F_i(\pi(s))>0$. Thus $F_i(\tilde\pi)>F_i(\pi)$, which is a contradiction. It follows that uniqueness holds.
\end{proof}

To prove existence, and in particular to understand \eqref{typecondition1}, we need to dig deeper into the structure of the optimal $\pi$. In particular, to consider limiting cases when some of the $\pi_i\to 0$. To do that, we borrow an idea from the theory of Martin boundaries (see for example Doob (1984)) and embed $\mathcal{P}$ in a compact set $\mathcal{P}_0$. 

Set $\eta=\{1,\dots,n\}$. For non-empty $A\subset\eta$ and $\pi\in \mathcal{P}$, let $\pi_A=(\frac{\pi_i}{\max_{j\in A}\pi_j})_{i\in A}$. Set $g(\pi)=(\pi_A)_{\emptyset\neq A\subset\eta}$, so $g:\mathcal{P}\to[0,1]^m$, where $m=\sum_{\emptyset\neq A\subset\eta}|A|=\sum_{k=1}^n k\binom{n}{k}=\sum_{j=0}^{n-1}n\binom{n-1}{j}=n2^{n-1}$. Let $\mathcal{P}_0$ be the closure of $g(\mathcal{P})$ in $[0,1]^m$, so $g$ is a continuous embedding of $\mathcal{P}$ in $\mathcal{P}_0$. We will think of $\mathcal{P}$ as a subset of $\mathcal{P}_0$, so to some extent we will use notation that identifies $\pi\in\mathcal{P}$ with $g(\pi)\in\mathcal{P}_0$. In particular, we will freely use $\pi$ to denote either an element of $\mathcal{P}$ or an element of $\mathcal{P}_0$, and in both cases will use the same notation $\pi_A$ for its components, as defined above. 

Let $\pi\in\mathcal{P}_0$. Then $\pi_\eta\in[0,1]^n$ may have some (but not all) of its components $=0$. Writing $\pi_\eta=(\pi_{\eta,1},\dots,\pi_{\eta,n})$, we let $A_0=\{i\mid \pi_{\eta,i}\neq 0\}$. If $A_0\neq\eta$, then $\pi_{\eta\setminus A_0}$ may in turn have some (but not all) of its components $=0$. Let $A_1=\{i\in\eta\setminus A_0\mid \pi_{\eta\setminus A_0,i}\neq 0\}$. Continuing this way with $\pi_{\eta\setminus A_0\cup A_1}$ etc., we partition $\eta$ into a finite number of non-empty subsets $A_0, A_1, \dots,A_J$ such that the components of each $\pi_{A_j}$ are all non-zero. Of course, if $\pi$ actually $\in\mathcal{P}$ then $A_0=\eta$. In fact, $\pi$ may be recovered from these $A_j$ and $\pi_{A_j}$, as the following Lemma shows. 
\begin{lemma}
\label{recursion}
Let $A\subset\eta$ be nonempty, and let $\pi\in\mathcal{P}_0$. Define the $A_j$ as above, and let $i=\min\{j| A\cap A_j\neq\emptyset\}$. Then for $k\in A$,
$$
\pi_{A,k}=
\begin{cases}
\frac{\pi_{A_i,k}}{\max_{j\in A\cap A_i}\pi_{A_i,j}}, & k\in A\cap A_i\\
0, & k\in A\setminus A_i.
\end{cases}
$$
\end{lemma}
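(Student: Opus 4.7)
My plan is to use that $\mathcal{P}_0$ is defined as a closure, so any $\pi\in\mathcal{P}_0$ is the limit of a sequence $g(\pi^{(\ell)})$ with $\pi^{(\ell)}\in\mathcal{P}$; by compactness of $[0,1]^m$ I may pass to a subsequence in which all relevant ratios $\pi^{(\ell)}_{k'}/\pi^{(\ell)}_k$ converge in $[0,\infty]$. Since each $\pi^{(\ell)}$ satisfies $\max_j\pi^{(\ell)}_j=1$ we have $\pi^{(\ell)}_{\eta,k}=\pi^{(\ell)}_k$, so $A_0=\{k:\lim_\ell\pi^{(\ell)}_k\neq 0\}$. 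Setting $B_i=A_0\cup\cdots\cup A_i$ and $\rho^{(\ell)}_i=\max_{j\in B_{i-1}^c}\pi^{(\ell)}_j$, the recursion defining the $A_j$ translates into $A_i=\{k\in B_{i-1}^c:\lim_\ell\pi^{(\ell)}_k/\rho^{(\ell)}_i\neq 0\}$. So the stratification $A_0,\dots,A_J$ encodes the asymptotic rates at which the components of $\pi^{(\ell)}$ approach $0$.

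The key technical step is a rate-comparison lemma: if $k\in A_i$ and $k'\in A_{i'}$ with $i'>i$, then $\pi^{(\ell)}_{k'}/\pi^{(\ell)}_k\to 0$. I would prove this by factoring
$$
\frac{\pi^{(\ell)}_{k'}}{\pi^{(\ell)}_k}=\frac{\pi^{(\ell)}_{k'}/\rho^{(\ell)}_{i'}}{\pi^{(\ell)}_k/\rho^{(\ell)}_i}\cdot\frac{\rho^{(\ell)}_{i'}}{\rho^{(\ell)}_i},
$$
in which the outer factor has a finite nonzero limit by the previous paragraph, while $\rho^{(\ell)}_{i+1}/\rho^{(\ell)}_i\to 0$ because $B_i^c=B_{i-1}^c\setminus A_i$, so the index attaining $\rho^{(\ell)}_{i+1}$ lies in $B_{i-1}^c\setminus A_i$, on which $\pi^{(\ell)}_{B_{i-1}^c,\cdot}\to 0$ by the very definition of $A_i$. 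Iterating from $i$ to $i'$ yields $\rho^{(\ell)}_{i'}/\rho^{(\ell)}_i\to 0$.

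With the rate lemma in hand, I compute $\pi^{(\ell)}_{A,k}=\pi^{(\ell)}_k/\max_{j\in A}\pi^{(\ell)}_j$. Decomposing $\max_{j\in A}\pi^{(\ell)}_j=\max_{i'\ge i}\max_{j\in A\cap A_{i'}}\pi^{(\ell)}_j$, the rate lemma shows that strata with index $i'>i$ contribute negligibly, hence $\max_{j\in A}\pi^{(\ell)}_j\sim\max_{j\in A\cap A_i}\pi^{(\ell)}_j$. For $k\in A\setminus A_i$ the numerator decays strictly faster than this denominator, so $\pi_{A,k}=0$. For $k\in A\cap A_i$ I use the purely algebraic identity
$$
\pi^{(\ell)}_{A,k}=\frac{\pi^{(\ell)}_{A_i,k}}{\max_{j\in A\cap A_i}\pi^{(\ell)}_{A_i,j}}\cdot\frac{\max_{j\in A\cap A_i}\pi^{(\ell)}_j}{\max_{j\in A}\pi^{(\ell)}_j},
$$
obtained by inserting and cancelling the normalizer $\max_{j'\in A_i}\pi^{(\ell)}_{j'}$. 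The last factor tends to $1$, and the first factor converges to $\pi_{A_i,k}/\max_{j\in A\cap A_i}\pi_{A_i,j}$, which is the claimed formula.

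The main obstacle is that the stratification $(A_j)$ depends discontinuously on $\pi\in\mathcal{P}_0$ (it is defined by which components of $\pi$ vanish), so the identity cannot be extended from $g(\mathcal{P})$ to $\mathcal{P}_0$ by continuity alone — on $g(\mathcal{P})$ itself only the trivial stratum $A_0=\eta$ occurs, for which the formula reduces to the definition of $\pi_A$. The substance of the argument is therefore in showing that any approximating sequence $\pi^{(\ell)}$ is forced, after passing to a subsequence, to exhibit the rate ordering dictated by the limiting stratification, and that this ordering then survives the algebraic manipulations above.
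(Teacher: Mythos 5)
Your argument is correct and follows essentially the same route as the paper: approximate $\pi$ by a sequence $g(\pi^{(\ell)})$ with $\pi^{(\ell)}\in\mathcal{P}$, show that components in strata beyond $A_i$ decay strictly faster than those in $A_i$, conclude that $\max_{j\in A}\pi^{(\ell)}_j$ is eventually attained on $A\cap A_i$, and finish with the same renormalization identity. The only cosmetic difference is that you obtain the rate comparison by telescoping through the successive normalizers $\rho^{(\ell)}_i$, whereas the paper reads it off in one step from the single coordinate $\pi_{B}$ with $B=\cup_{j\ge i}A_j$, whose nonzero entries are exactly those indexed by $A_i$.
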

\begin{proof}
To see this, choose $\pi^{(m)}\in\mathcal{P}$ such that $g(\pi^{(m)})\to \pi$. Set $B=\cup_{j\ge i} A_j$, so $A\subset B$. For $k\in B$, we have
$$
\pi_{B,k}=\lim_{m\to\infty} \pi^{(m)}_{B,k}=\lim_{m\to\infty} \frac{\pi^{(m)}_k}{\max_{j\in B}\pi^{(m)}_j}.
$$
By definition of $\pi_{B,k}$, this is non-zero precisely for $k\in A_i$, so if $j\in A_i$ and $k\in B\setminus A_i$ then $\frac{\pi^{(m)}_k}{\pi^{(m)}_j}\to 0$. Therefore $\max_{j\in A}\pi^{(m)}_j=\max_{j\in A\cap A_i}\pi^{(m)}_j$ for sufficiently large $m$, and if $k\in A$ then
$$
\pi_{A,k}
=\lim_{m\to\infty}\pi^{(m)}_{A,k}
=\lim_{m\to\infty}\frac{\pi^{(m)}_k}{\max_{j\in A}\pi^{(m)}_j}
=\lim_{m\to\infty}\frac{\pi^{(m)}_k}{\max_{j\in A\cap A_i}\pi^{(m)}_j}.
$$
By the above, this $=0$ if $k\in A\setminus A_i$. If $k\in A\cap A_i$ we may divide numerator and denominator by ${\max_{j\in A_i}\pi^{(m)}_j}$ to see that it 
$$
=\lim_{m\to\infty}\frac{\pi^{(m)}_{A_i,k}}{\max_{j\in A\cap A_i}\pi^{(m)}_{A_i,j}}
=\frac{\pi_{A_i,k}}{\max_{j\in A\cap A_i}\pi_{A_i,j}},
$$ 
as required.
\end{proof}
What is going on in this argument is that for $\pi\in\mathcal{P}_0$ and $A_0,A_1,\dots,A_J$ as above, having a sequence $\pi(n)\in\mathcal{P}$ converge to $\pi$ in the topology of $\mathcal{P}_0$ means that the $\pi^{(m)}_j$ converge to non-zero values for $j\in A_0$, they converge to $0$ at a common rate for $j\in A_1$ (with $\pi_{A_1}$ giving a suitably renormalized limit), they converge to $0$ at a faster rate for $j\in A_2$, etc. 

For a given payout function $d(t)$ we defined a tontine above, corresponding to any $\pi\in\mathcal{P}$. We can generalize this to any $\pi\in\mathcal{P}_0$. It pays only to individuals in $A_0$, as long as any of them survive, using participation rates $\pi_i$. As soon as the last of these individuals dies, it starts paying out to individuals in $A_1$, using participation rates $\pi_{A_1,i}$. Once they all die, it starts paying out to individuals in $A_2$, using rates $\pi_{A_2,i}$, etc. Since payments are contingent on the extinction of an earlier group, we call this generalization a {\it contingent tontine}. 

If there is a contingent tontine that is favourable to the last group to start collecting, it is quite plausible that no $\pi$ can achieve equity. The content of Theorem \ref{thm:existence} is that these two statements are in fact equivalent. Moreover, we shall see in the course of the proof that equation \eqref{typecondition1} precisely captures the failure of the first statement. 

We may now generalize the definition of the present value functions $F_i(\pi)$. Let $\pi\in \mathcal{P}_0$, and suppose that $i\in A_k$. Let $T_0=0$, and for $1\le j\le J+1$ let $T_k$ be the time the last survivor from $A_0\cup\dots\cup A_{k-1}$ dies. Let $\zeta_i$ be the lifetime of individual $i$. Define
\begin{equation}
F_i(\pi)=\int_0^\infty e^{-rt}wd(t)E\Big[\frac{\pi_{A_k,i} }{\sum_{j\in A_k} \pi_{A_k,j} w_jN_j(t)}1_{\{T_k<t<\zeta_i}\}\Big]\,dt.
\label{contingenteqn}
\end{equation}
The point of passing to the more complicated index set $\mathcal{P}_0$ is the following:
\begin{lemma}
Each $F_i:\mathcal{P}_0\to\mathbb{R}$ is continuous.
\label{continuity}
\end{lemma}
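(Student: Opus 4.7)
The plan is to rewrite $F_i$ in a single form that is manifestly continuous as a function on $\mathcal{P}_0$, and then apply dominated convergence. Let $B(t)=\{j:\zeta_j>t\}$ denote the (random) set of survivors at time $t$. I would first establish the identity
$$
F_i(\pi)=\int_0^\infty e^{-rt}wd(t)\,E\Big[\frac{\pi_{B(t),i}}{\sum_{j\in B(t)}\pi_{B(t),j}w_jN_j(t)}\,1_{\{\zeta_i>t\}}\Big]\,dt
$$
for every $\pi\in\mathcal{P}_0$, where the quotient is interpreted as $0$ when $\pi_{B(t),i}=0$. When $\pi\in\mathcal{P}$, the normalizing factor $\max_{\ell\in B(t)}\pi_\ell$ hidden inside $\pi_{B(t)}$ cancels between numerator and denominator, recovering the original formula from Section~\ref{sec:mixing}. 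For a boundary point with partition $A_0,\dots,A_J$, I would apply Lemma~\ref{recursion} with $A=B(t)$: letting $i_0=\min\{\ell:B(t)\cap A_\ell\neq\emptyset\}$, the lemma gives $\pi_{B(t),j}=0$ for $j\notin A_{i_0}$ and $\pi_{B(t),j}=\pi_{A_{i_0},j}/\max_{\ell\in B(t)\cap A_{i_0}}\pi_{A_{i_0},\ell}$ for $j\in B(t)\cap A_{i_0}$. With $i\in A_k$, on $\{\zeta_i>t\}$ we have $i\in B(t)\cap A_k$ and hence $i_0\leq k$: when $i_0<k$ (i.e.\ $T_k\geq t$) the numerator $\pi_{B(t),i}$ vanishes, matching the zero indicator in \eqref{contingenteqn}; when $i_0=k$ (i.e.\ $T_k<t<\zeta_i$) the common normalization cancels and the integrand reduces exactly to the one in \eqref{contingenteqn}.

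Once the identity is in hand, continuity follows routinely. If $\pi^{(m)}\to\pi$ in $\mathcal{P}_0$, then coordinatewise $\pi^{(m)}_B\to\pi_B$ in $[0,1]^B$ for each non-empty $B\subset\eta$. Fix $(\omega,t)$. Because $\max_{\ell\in B(t)}\pi^{(m)}_{B(t),\ell}=1$ for every $m$ (this property passes from $\mathcal{P}$ to $\mathcal{P}_0$ by continuity), the denominator of the integrand is bounded below by $\min_{j\in B(t)}w_j>0$ uniformly in $m$, so both numerator and denominator converge and therefore so does their ratio. On $\{\zeta_i>t\}$ the denominator is at least $\pi^{(m)}_{B(t),i}w_iN_i(t)=\pi^{(m)}_{B(t),i}w_i$, yielding the uniform upper bound $1/w_i$ on the integrand. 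The integrand is thus dominated by the deterministic function $(w/w_i)e^{-rt}d(t)$, which is integrable by \eqref{tontineconstraint}. Dominated convergence then gives $F_i(\pi^{(m)})\to F_i(\pi)$.

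The main obstacle is the case analysis that verifies the unified identity, as this is where the subtle matching between the hierarchical contingent structure of \eqref{contingenteqn} and the single-coordinate formulation takes place; Lemma~\ref{recursion} does the heavy lifting here. Once the unified formula is established, the fact that the coordinate $\pi\mapsto \pi_{B(t)}$ is continuous on $\mathcal{P}_0$ by construction reduces the continuity of $F_i$ to a standard dominated-convergence argument.
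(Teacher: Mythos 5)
Your proof is correct, and it takes a genuinely different route from the paper's. The paper proves continuity by taking $\pi^{(m)}\in\mathcal{P}$ converging to $\pi\in\mathcal{P}_0$, splitting the time integral over the random intervals $[T_k,T_{k+1})$ determined by the limit's partition $A_0,\dots,A_J$, showing the terms with $k<\ell$ vanish and the $k=\ell$ term converges, and then handling general sequences $\pi^{(m)}\in\mathcal{P}_0$ by an extra approximation step (choosing nearby $\tilde\pi^{(m)}\in\mathcal{P}$). You instead establish a single representation of $F_i$ valid on all of $\mathcal{P}_0$, with integrand $\pi_{B(t),i}/\sum_{j\in B(t)}\pi_{B(t),j}w_jN_j(t)$ indexed by the random survivor set $B(t)$, verified by applying Lemma~\ref{recursion} pointwise in $(\omega,t)$ with $A=B(t)$; continuity then follows from one application of dominated convergence, since $\pi\mapsto\pi_B$ is a coordinate block on $\mathcal{P}_0$, the normalization $\max_{\ell\in B}\pi_{B,\ell}=1$ passes to the closure, the denominator is uniformly bounded below by $\min_j w_j>0$, and the integrand is dominated by $(w/w_i)e^{-rt}d(t)$. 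What your approach buys is a one-step argument that avoids both the interval-by-interval limit analysis and the paper's two-stage approximation for boundary sequences, at the cost of the case analysis needed to match the unified integrand with the contingent-tontine definition \eqref{contingenteqn} (your $i_0<k$ versus $i_0=k$ dichotomy, which correctly corresponds to $T_k\ge t$ versus $T_k<t$ up to a $P\otimes dt$-null set of ties, harmless since lifetimes have continuous distributions). Both arguments work in the reduced setting $n=K$ with Bernoulli $N_j(t)$, as the appendix assumes at this point, and your domination bound $1/w_i$ on $\{\zeta_i>t\}$ is valid whether or not $\pi_{B(t),i}$ vanishes.
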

\begin{proof}
Let $\pi\in\mathcal{P}_0$. Suppose that $\pi^{(m)}\to \pi$. Assume to start with that each $\pi^{(m)}\in\mathcal{P}$. Define $A_0, \dots ,A_J$ as above (using $\pi$), and likewise $T_0, \dots,T_{J+1}$, and let $i\in A_\ell$. Set $B_k=A_k\cup\cdots\cup A_J$. Then 
\begin{align*}
F_i(\pi^{(m)})
&=\int_0^\infty e^{-rt}wd(t)E\Big[\frac{\pi^{(m)}_{i} }{\sum_{j\in\eta}\pi^{(m)}_j w_jN_j(t)}1_{\{t<\zeta_i}\}\Big]\,dt\\
&=\sum_{k} \int_0^\infty e^{-rt}wd(t)E\Big[\frac{\pi^{(m)}_{i} }{\sum_{j\in\eta} \pi^{(m)}_{j} w_jN_j(t)}1_{\{T_k<t<\zeta_i\land T_{k+1}}\}\Big]\,dt\\
&=\sum_{k\le \ell} \int_0^\infty e^{-rt}wd(t)E\Big[\frac{\pi^{(m)}_{i} }{\sum_{j\in B_k} \pi^{(m)}_{j} w_jN_j(t)}1_{\{T_k<t<\zeta_i\land T_{k+1}}\}\Big]\,dt\\
&=\sum_{k\le \ell} \int_0^\infty e^{-rt}wd(t)E\Big[\frac{\tilde \pi^{(m)}_{i,k} }{\sum_{j\in B_k} \tilde\pi^{(m)}_{j,k} w_jN_j(t)}1_{\{T_k<t<\zeta_i\land T_{k+1}}\}\Big]\,dt.
\end{align*}
where $\tilde \pi^{(m)}_{q,k}=\frac{\pi^{(m)}_q}{\max_{j\in A_k}\pi^{(m)}_j}$. Now send $m\to\infty$. If $j\in A_k$ then $\tilde \pi^{(m)}_{j,k}\to\pi_{A_k,j}$, while if $j\in B_{k+1}$ then $\tilde \pi^{(m)}_{j,k}\to 0$. In particular, dominated convergence implies that the terms with $k<\ell$ vanish in the limit, while the $k=\ell$ term converges to to $F_i(\pi)$. 

The general case now follows. If $\pi^{(m)}\to\pi$ but we no longer assume $\pi^{(m)}\in\mathcal{P}$, simply choose $\tilde\pi^{(m)}\in\mathcal{P}$ such that $|F_i(\pi^{(m)})-F_i(\tilde \pi^{(m)})|\to 0$ and $\|\pi^{(m)}-\tilde\pi^{(m)}\|\to 0$. Then $\tilde \pi^{(m)}\to\pi$ so $\lim F_i(\pi^{(m)})=\lim F_i(\tilde \pi^{(m)})=F_i(\pi)$. 
\end{proof}
Define a {\it very equitable} participation rate to be any $\pi\in\mathcal{P}_0$ that minimizes $\theta_0(\pi)=\max_{i,k} |F_i(\pi)-F_k(\pi)|$ over $\pi\in\mathcal{P}_0$. By Lemma \ref{continuity} and compactness of $\mathcal{P}_0$, such a $\pi$ exists. Of course, $\pi$ is equitable if and only if $\theta_0(\pi)=0$ and $\pi\in\mathcal{P}$. 
\begin{lemma} Let $\pi$ be very equitable. Let $a_0=\min_{i\in\eta}F_i(\pi)$, and  $a_J=\max_{i\in\eta}F_i(\pi)$. Then $F_i(\pi)=a_0$ for every $i\in A_0$, and $F_i(\pi)=a_J$ for every $i\in A_J$.
\label{mostequitable}
\end{lemma}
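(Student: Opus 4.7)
My plan is to argue both halves of the lemma by contradiction using perturbations within $\mathcal{P}_0$; I focus on the $A_0$ conclusion, the $A_J$ one being symmetric. Suppose $\pi$ is very equitable and, for contradiction, that $F_{i^*}(\pi)>a_0$ for some $i^*\in A_0$. The goal is to produce a nearby $\pi'\in\mathcal{P}_0$ with $\theta_0(\pi')<\theta_0(\pi)$.

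The first key ingredient is a conservation identity, obtained by the same bookkeeping as in the computation preceding Lemma~\ref{lem:Flessthan1}: summing $n_iw_iF_i$ over $i\in A_0$ yields
\[
\sum_{i\in A_0}n_iw_iF_i(\pi)=\int_0^\infty e^{-rt}wd(t)\Bigl(1-\prod_{j\in A_0}{}_tq_{x_j}^{n_j}\Bigr)\,dt,
\]
a quantity that depends only on $d(t)$, the mortality structure, and the data $(x_i,n_i,w_i)_{i\in A_0}$, and is independent of $\pi_{A_0}$. Thus the $(n_iw_i)$-weighted average $c_0$ of $\{F_i(\pi)\}_{i\in A_0}$ is pinned, and since each $F_i\in[a_0,a_J]$ we have $c_0\in[a_0,a_J]$. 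The second key ingredient is that a perturbation that only changes the top-level components $\pi_{A_0}$ leaves $F_\ell$ unchanged for every $\ell\notin A_0$: by formula~\eqref{contingenteqn}, $F_\ell$ with $\ell\in A_k$ depends only on $\pi_{A_k}$ and on the mortality of the lower groups $A_0,\dots,A_{k-1}$.

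Applying the directional-derivative formula from the proof of part~(a) of Theorem~\ref{thm:existence} to a within-$A_0$ direction -- decreasing $\pi_i$ at the current $\arg\max_{A_0}F_i$ and increasing $\pi_j$ at the current $\arg\min_{A_0}F_j$ -- yields a curve $\pi(s)$ along which $\max_{A_0}F$ strictly decreases and $\min_{A_0}F$ strictly increases. Since the weighted average is pinned at $c_0\in[a_0,a_J]$, this deformation cannot push any $F_i,\,i\in A_0$, outside $[a_0,a_J]$, so $\theta_0$ cannot grow; if any $F_i,\,i\in A_0$, currently realizes $a_0$ or $a_J$, then $\theta_0$ strictly decreases, contradicting very equitability. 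Hence I may reduce to the case $F_i\equiv c_0$ on $A_0$, and it remains to rule out $c_0>a_0$. For this, pick $\ell^*\in A_k$ with $k\ge 1$ and $F_{\ell^*}(\pi)=a_0$, and construct a curve $\pi(\delta)\in\mathcal{P}_0$ converging to $\pi$ as $\delta\to 0^+$ along which $\ell^*$ is effectively promoted into the top-level pool $A_0$ (i.e., its top-level component is raised from $0$ to $\delta$). A first-order Taylor analysis, in the spirit of the expansion in the proof of Lemma~\ref{continuity}, yields $F_{\ell^*}(\pi(\delta))=a_0+\kappa\delta+o(\delta)$ with $\kappa>0$ (since $\ell^*$ now receives income during the $A_0$-alive period), $F_i(\pi(\delta))=c_0-\rho_i\delta+o(\delta)$ with $\rho_i>0$ for $i\in A_0$ (dilution), and only $O(\delta)$ corrections for the remaining $F_\ell$. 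For $\delta$ sufficiently small, the new minimum over $\eta$ strictly exceeds $a_0$ while the new maximum does not exceed $a_J$, giving $\theta_0(\pi(\delta))<\theta_0(\pi)$: the desired contradiction.

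The principal technical obstacle is this last cross-partition step: one must carefully construct the approximating curve $\pi(\delta)$, compute first-order corrections in the $\mathcal{P}_0$-topology (including the renormalization that happens within $A_k$ when $\ell^*$ leaves that group), and verify that the various first-order terms combine so as to strictly decrease $\theta_0$. A direct analogue of the expansion carried out in the proof of Lemma~\ref{continuity} should suffice, but the sign-tracking demands care. The $A_J$ conclusion then follows by the symmetric argument: use the analogous conservation identity on $A_J$ to pin the weighted average $c_J\in[a_0,a_J]$, equalize within $A_J$, and rule out $c_J<a_J$ by moving an element with $F=a_J$ from a lower group down into $A_J$.
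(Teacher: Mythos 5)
Your overall strategy (perturb within the top group, then move mass across the partition to force a strict drop in $\theta_0$) is the right one, and the conservation identity pinning $\sum_{i\in A_0}n_iw_iF_i(\pi)$ is correct and genuinely useful. But two steps have real gaps. First, the reduction to ``$F_i\equiv c_0$ on $A_0$'' does not go through as stated. The inference ``if any $F_i$, $i\in A_0$, realizes $a_0$ or $a_J$, then $\theta_0$ strictly decreases'' fails whenever that same extreme value is also attained outside $A_0$: a within-$A_0$ perturbation leaves every $F_\ell$, $\ell\notin A_0$, untouched, so the global max and min can both be unchanged and no contradiction with very-equitability arises. Moreover, exact equalization of $F$ on $A_0$ is itself an equity problem for the sub-pool $A_0$ (whose payouts during the $A_0$-alive period do not depend on the later groups), and by Lemma \ref{individualexistence} / condition \eqref{typecondition1} such equitable rates need not exist at all; small within-group perturbations only buy you the weaker dichotomy actually used in the paper (each group has either all its $F_i$ equal to $a_0$, or all strictly above), and the rest of the argument has to be organized around that weaker statement rather than around a constant value $c_0$.

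Second, and more seriously, the ``promotion'' of a single $\ell^*\in A_k$, $k\ge 1$, into the top group is not a small perturbation, and the claimed expansion with only $O(\delta)$ corrections for the remaining $F_\ell$ is false. Once $\ell^*$ carries a positive top-level rate $\delta$, every group $A_1,\dots,A_k$ must now wait until $\ell^*$ dies before collecting anything: on the event that $\ell^*$ outlives all of $A_0$, $\ell^*$ alone receives the entire payout during the interim, no matter how small $\delta$ is. Hence the $F_\ell$ of those intermediate-group members change by an amount bounded away from zero as $\delta\to 0$ (indeed your curve $\pi(\delta)$ does not converge to $\pi$ in $\mathcal{P}_0$: the component $\pi_{\eta\setminus A_0}$ jumps to the vector concentrated at $\ell^*$), and these $O(1)$ losses can push other values below $a_0$, increasing $\theta_0$. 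In addition, even granting a valid expansion, promoting one minimizer does not raise the global minimum when $a_0$ is attained by several individuals, or by whole groups, outside $A_0$, so $\theta_0$ need not decrease. The paper's construction avoids both problems: it merges an entire wholly-$a_0$ group $A_j$ into its immediate predecessor $A_{j-1}$ (scaling its rates by a small $s$), which leaves the collection start-times of all later groups unchanged, affects only $F$ on $A_{j-1}\cup A_j$, and is continuous in $s$; iterating this over the wholly-$a_0$ groups, starting from the first one, vacates the level $a_0$ after finitely many small steps and only then produces the strict decrease in $\theta_0$. Your proof needs this (or an equivalent) cross-partition mechanism; as written, the key final step would fail.
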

\begin{proof}
Let $\pi$ be very equitable. 
As in the proof of uniqueness, we will perturb $\pi$ to improve equity. Fix $j$. Let $\tilde A$ consist of the $i\in A_j$ which minimize $F_i(\pi)$ over $A_j$, and set 
$$
\pi_{A_j,i}(s)=
\begin{cases}
\pi_{A_j,i}(1+s), &i\in\tilde A\\
\pi_{A_j,i}, & i\in A_j\setminus\tilde A.
\end{cases}
$$
We do not perturb $\pi_{A_k,q}$ for any $k\neq j$, so there is no impact on $F_q$ for $q\notin A_j$. 
Note that this $\pi(s)$ may not lie $\in\mathcal{P}_0$, as $\max_{i\in A_j}\pi_{A_j,i}(s)$ may now be $\neq 1$. This will not turn out to matter, and could in any case be remedied by rescaling at a suitable point in the argument. 
A calculation as in \eqref{eqn:perturb} shows that for $i\in A_j$,
\begin{multline*}
\frac{d}{ds}F_i(\pi(s))=\\
\int_0^\infty e^{-rt}{}_tp_{x_i}wd(t)E\Big[\frac{\sum_{k\in A_j}[\pi_{A_j,i}'(s)\pi_{A_j,k}(s)-\pi_{A_j,i}(s)\pi_{A_j,k}'(s)]w_kN_k(t) }{(\sum_{k\in A_j} \pi_{A_j,k}(s) w_kN_k(t))^2}1_{\{T_k<t<\zeta_i\land T_{k+1}}\}\Big]\,dt.
\end{multline*}
Assume that $\tilde A\neq A_j$, i.e. that $F_i(\pi)$ is not constant on $A_j$. I claim that 
$$
\frac{d}{ds}F_i(\pi(s))\text{ is }
\begin{cases}
>0, & i\in \tilde A\\
<0, & i\in A_j\setminus \tilde A.
\end{cases}
$$
In other words, this perturbation brings the lowest $F_i$'s up, and the other $F_i$'s down. 

Suppose $i\in \tilde A$. If $k\in \tilde A$ then $\pi_{A_j,i}'(s)\pi_{A_j,k}(s)-\pi_{A_j,i}(s)\pi_{A_j,k}'(s)=\pi_{A_j,i}\pi_{A_j,k}-\pi_{A_j,i}\pi_{A_j,k}=0$. If $k\in A_j\setminus \tilde A$ then 
$\pi_{A_j,i}'(s)\pi_{A_j,k}(s)-\pi_{A_j,i}(s)\pi_{A_j,k}'(s)=\pi_{A_j,i}\pi_{A_j,k}>0$. Therefore the sum is $>0$ so $\frac{d}{ds}F_i(\pi(s))>0$ too. 

Now suppose $i\in A_j\setminus \tilde A$. If $k\in \tilde A$ then $\pi_{A_j,i}'(s)\pi_{A_j,k}(s)-\pi_{A_j,i}(s)\pi_{A_j,k}'(s)=-\pi_{A_j,i}\pi_{A_j,k}<0$. If $k\notin A$ then 
$\pi_{A_j,i}'(s)\pi_{A_j,k}(s)-\pi_{A_j,i}(s)\pi_{A_j,k}'(s)=0$. Therefore the sum is $<0$ so $\frac{d}{ds}F_i(\pi(s))<0$ too. 

The result of this calculation is that for small $s$, our perturbation reduces the variation of $F_i(\pi)$ on $A_j$, unless $i\mapsto F_i(\pi)$ is already constant on $A_j$. 

Turning to the statement of the lemma, there must be a $j$ such that $F_i(\pi)=a_0$ for every $i\in A_j$, otherwise we could perturb so as to raise every $F_i(\pi)$ which equals $a_0$, while lowering or not changing the other $F_k(\pi)$. (If necessary, rescale to keep $\pi\in\mathcal{P}_0$.) This would reduce $\theta_0(\pi)$ which is impossible. By the same perturbation, we may also assume that for any $j$, either $F_i(\pi)=a_0$ for every $i\in A_j$, or $F_i(\pi)>a_0$ for every $i\in A_j$. Let $J_0$ be the set of $j$ of the former type. Our goal is to show that $0\in J_0$. 

Suppose that $j\ge 1$ belongs to $J_0$, but $j-1$ does not. Consider the following perturbation. Combine $A_{j-1}$ and $A_{j}$, by setting 
$$
\pi_{A_{j-1}\cup A_{j},i}(s)=
\begin{cases}
\pi_{A_{j-1},i}, & i\in A_{j-1}\\
s\pi_{A_{j},i}, & i\in A_{j}
\end{cases}
$$
for $s>0$. This will not impact $F_i(\pi(s))$ for $i$ other than $j-1$ or $j$. For $i\in A_{j}$ this has no effect on the expectation in \eqref{contingenteqn} representing payments after time $T_{j}$, but with positive probability it adds a non-zero contribution from the integral over $[T_{j-1},T_{j})$. Therefore $F_i(\pi(s))$ increases for each $i\in A_{j}$. A derivative calculation similar to that given above shows that the $F_i(\pi(s))$ decrease for $i\in A_{j-1}$. 

This perturbation may or may not decrease $\theta_0(\pi)$. But if $0\notin J_0$ then we may apply it in turn to the first $j$ in $J_0$, then the second $j$ in $J_0$, etc., until eventually $\theta_0(\pi)$ will decrease. This would be a contradiction, so it follows that $0\in J_0$. 

We may apply a similar argument to $a_J$ to prove the remaining conclusions.
\end{proof}

We are now ready to prove the existence portion of Theorem \ref{thm:existence}, under our additional restriction that $n=K$. For $A\subset\eta$, let $\alpha_A=\frac{1}{w}\sum_{i\in A}w_i$ be the percentage of the total initial investment contributed by members of $A$.
\begin{lemma}
Fix $d(t)$ as well as the $x_i$ and $w_i$, and assume that each cohort consists of a single individual.  For there to exist a choice of $\pi\in\mathcal{P}$ with $F_i(\pi)=1-\epsilon$ for each $i$, it is necessary and sufficient that for every $A\subset\eta$ with $\emptyset\neq A\neq\eta$ we have 
\begin{equation}
\int_0^\infty e^{-rt}d(t)(\prod_{i\notin A}{}_tq_{x_i})(1-\prod_{i\in A}{}_tq_{x_i})\,dt < \alpha_A(1-\epsilon).
\label{concavity1}
\end{equation}
\label{individualexistence}
\end{lemma}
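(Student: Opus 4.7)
The plan is to split the argument into necessity and sufficiency, leveraging the Martin-boundary apparatus already built up for the space $\mathcal{P}_0$, together with the continuity provided by Lemma \ref{continuity} and the structural conclusion of Lemma \ref{mostequitable}.

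\textbf{Necessity.} Suppose $\pi\in\mathcal{P}$ is equitable, so $F_i(\pi)=1-\epsilon$ for every $i$. Fix a proper nonempty $A\subset\eta$. I would sum the $F_i$ over $i\in A$ via the same double-counting step used in the paragraph preceding Lemma \ref{lem:Flessthan1} to obtain
$$
w\alpha_A(1-\epsilon)=\sum_{i\in A}w_iF_i(\pi)=\int_0^\infty e^{-rt}wd(t)E\Big[\frac{\sum_{j\in A}\pi_jw_jN_j(t)}{\sum_j\pi_jw_jN_j(t)}\mathbf{1}_{\{\sum_jN_j(t)>0\}}\Big]\,dt.
$$
The integrand is $\le 1$ pointwise and equals $1$ precisely on the event that everyone outside $A$ has died but some member of $A$ survives; under mild nondegeneracy of $d(t)$, the remaining event (both groups having simultaneous survivors) has positive $d$-measure, which strips a strictly positive amount from the right-hand side and produces \eqref{concavity1}.

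\textbf{Sufficiency.} Using Lemma \ref{continuity} and compactness of $\mathcal{P}_0$, pick a very equitable $\pi\in\mathcal{P}_0$ with decomposition $A_0,\dots,A_J$. I would first observe that for any $\pi\in\mathcal{P}_0$, repeating the double-count of the preceding paragraph but over all of $\eta$ yields the average identity $\sum_i w_iF_i(\pi)=w(1-\epsilon)$, since the contingent tontine distributes the entire payout rate $wd(t)$ whenever any subscriber is alive. The goal is to show $A_0=\eta$. Suppose instead that $A_0\subsetneq\eta$. Summing \eqref{contingenteqn} over $i\in A_0$ collapses the ratio to $1$ on the event that some $i\in A_0$ is alive, giving
$$
w\alpha_{A_0}a_0=\sum_{i\in A_0}w_iF_i(\pi)=\int_0^\infty e^{-rt}wd(t)\Bigl(1-\prod_{i\in A_0}{}_tq_{x_i}\Bigr)\,dt,
$$
where the first equality uses Lemma \ref{mostequitable} ($F_i\equiv a_0$ on $A_0$). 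The identity
$$
\Bigl(1-\prod_{i\in\eta}{}_tq_{x_i}\Bigr)-\Bigl(1-\prod_{i\in A_0}{}_tq_{x_i}\Bigr)=\prod_{i\in A_0}{}_tq_{x_i}\Bigl(1-\prod_{i\notin A_0}{}_tq_{x_i}\Bigr),
$$
integrated against $e^{-rt}d(t)$ and combined with $\int_0^\infty e^{-rt}d(t)(1-\prod_i{}_tq_{x_i})\,dt=1-\epsilon$, rearranges to
$$
\alpha_{A_0}a_0=(1-\epsilon)-\int_0^\infty e^{-rt}d(t)\prod_{i\in A_0}{}_tq_{x_i}\Bigl(1-\prod_{i\notin A_0}{}_tq_{x_i}\Bigr)\,dt.
$$
Applying hypothesis \eqref{concavity1} with $A=\eta\setminus A_0$ bounds the final integral strictly by $(1-\alpha_{A_0})(1-\epsilon)$, forcing $a_0>1-\epsilon$. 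But $a_0=\min_i F_i(\pi)$ while the average of the $w_iF_i(\pi)/w$ equals $1-\epsilon$, so $a_0\le 1-\epsilon$, a contradiction. Therefore $A_0=\eta$, so $J=0$ and $\pi\in\mathcal{P}$; then $A_0=A_J=\eta$, and Lemma \ref{mostequitable} gives $F_i\equiv a_0=a_J$. The average identity pins the common value at $1-\epsilon$, so $\pi$ is equitable.

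\textbf{Main obstacle.} The conceptual heart of the argument is recognizing that applying the hypothesis to the \emph{complement} $\eta\setminus A_0$ (rather than $A_0$ itself, or $A_J$) is what obstructs the optimum from escaping $\mathcal{P}$; the corresponding identity turns the first-group contingent payout into a direct quantitative bound on $a_0$. Once this matching is in place, everything else reduces to bookkeeping with indicator functions and invoking the already-established lemmas.
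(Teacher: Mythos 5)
Your proof is correct and follows essentially the same route as the paper: both take a very equitable $\pi\in\mathcal{P}_0$ (compactness plus Lemma \ref{continuity}), invoke Lemma \ref{mostequitable} together with the weighted-average identity, and derive a contradiction with \eqref{concavity1} when $J>0$ — the paper applies the hypothesis directly to $A_J$ (via $a_J\ge 1-\epsilon$ and the event $\{T_J\le t,\ \text{some }i\in A_J\text{ alive}\}$), while you apply it to the complement $\eta\setminus A_0$ (via $a_0\le 1-\epsilon$ and the exact collapse of the $A_0$-payout), a mirror image of the same computation. Your necessity argument coincides with the paper's, formula \eqref{eqn:necessity}.
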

We may think of \eqref{concavity1} failing for one of two reasons -- the presence of particularly elderly individuals, or of individuals who contribute a disproportionately large fraction of the initial investment. In either case, we let $A$ consist of the remaining people (younger, or investing less). Condition \eqref{concavity1} will fail if either $\prod_{i\notin A}{}_tq_{x_i}$ is large (i.e. the $A^c$ individuals all die early into the tontine), or if $\alpha_A$ is small  (i.e. the $A^c$ individuals over-invest). A well-designed $d(t)$ will attempt to mitigate these possibilities, though we have seen that this is not always possible. In particular, if we fix a choice of $d(t)$ but allow the $w_i$ to vary, there will always be a choice for the $w_i$ that makes one of the $\alpha_A$ small enough to force \eqref{concavity1} to fail. 

\begin{proof}
Assume \eqref{concavity1}, and let $\pi\in\mathcal{P}_0$ be very equitable. Let the $a_J$ be as in Lemma \ref{mostequitable}. By Lemma \ref{mostequitable} we have $a_J\ge 1-\epsilon$, since a suitably weighted average of the $F_i(\pi)$ equals $1-\epsilon$. Suppose that there is no equitable $\pi$, or that the equitable $\pi$ belongs to $\mathcal{P}_0\setminus\mathcal{P}$. Then $J>0$, so $0<|A_J|<n$. Moreover
\begin{align*}
(1-\epsilon) \alpha_{A_J}
&\le \sum_{i\in A_J} \frac{w_i}{w}F_i(\pi)\
=\int_0^\infty e^{-rt}d(t)\sum_{i\in A_J}E\Big[\frac{\pi_{A_J,i}w_i}{\sum_{k\in A_J} \pi_{A_J,k} w_kN_k(t)}1_{\{T_{A_J}\le t<\zeta_i\}}\Big]\,dt\\
&=\int_0^\infty e^{-rt}d(t)E\Big[\sum_{i\in A_J}\frac{\pi_{A_J,i}w_iN_i(t)}{\sum_{k\in A_J} \pi_{A_J,k} w_kN_k(t)}1_{\{N_i(t)\neq 0, T_{A_J}\le t\}}\Big]\,dt\\
&=\int_0^\infty e^{-rt}d(t)P\Big(\sum_{i} N_i(t)>0, T_{A_J}\le t\Big)\,dt\\
&=\int_0^\infty e^{-rt}d(t)(\prod_{i\notin A_J}{}_tq_{x_i})(1-\prod_{i\in A_J}{}_tq_{x_i})\,dt,
\end{align*}
which violates \eqref{concavity1}. 

Conversely, suppose $\pi\in\mathcal{P}$ is equitable, and let $A\subset \eta$ be non-empty and $\neq \eta$. Then as above,
\begin{align}
(1-\epsilon) \alpha_{A}
&= \sum_{i\in A} \frac{w_i}{w}F_i(\pi)
=\int_0^\infty e^{-rt}d(t)E\Big[\sum_{i\in A}\frac{\pi_{i}w_iN_i(t)}{\sum_{k\in \eta} \pi_{k} w_kN_k(t)}1_{\{N_i(t)\neq 0\}}\Big]\,dt
\nonumber \\
&>\int_0^\infty e^{-rt}d(t)P\Big(\sum_{i\in A} N_i(t)>0, \sum_{i\in \eta\setminus A}N_i(t)=0\Big)\,dt,
\label{eqn:necessity}
\end{align}
which shows \eqref{concavity1}.
\end{proof}

The problem with condition \eqref{concavity1} of Lemma \ref{individualexistence} is that it involves checking $2^n-2$ conditions. Condition \eqref{typecondition1} brings this down to a manageable number, provided we have a modest number of cohorts. We therefore now abandon the assumption that $n=K$, in which case recall that 
$\alpha_A$ once more denotes $\frac{1}{w}\sum_{i\in A}n_iw_i$. 

\begin{proof}[Proof of (b) of Theorem \ref{thm:existence}]
Assume condition \eqref{typecondition1}, which we may restate as
\begin{equation}
\int_0^\infty e^{-rt}d(t)\prod_{i\notin A}{}_tq_{x_i}^{n_i}\,dt < \alpha_A(1-\epsilon)+\epsilon.
\label{typecondition2}
\end{equation}
Condition \eqref{concavity1} involves a general collection of subscribers, which we will take to consist of $0\le k_i\le n_i$ individuals from the $i$th group, $i=1, \dots, K$. Stated in this way it becomes that
\begin{equation}
\int_0^\infty e^{-rt}d(t)\prod_{i=1}^K{}_tq_{x_i}^{n_i-k_i}\,dt < (1-\epsilon)\sum_{i=1}^K\frac{k_iw_i}{w}+\epsilon
\label{concavity2}
\end{equation}
for every choice of $0\le k_i\le n_i$ (other than $(0,\dots,0)$ and $(n_1,\dots,n_K)$). Observe that the left hand side of \eqref{concavity2} is a concave function of each $k_i$ (when the others are fixed), while the right hand side varies in an affine way. Observe also that \eqref{typecondition2} is precisely \eqref{concavity2} for the extreme points of $I=[0,n_1]\times\cdots\times[0,n_K]$ other than $(0,\dots,0)$ and $(n_1,\dots,n_K)$. It is easily verified that the same inequality holds at the last two points as well, but with $=$ rather than $<$. 
This is enough to conclude that the strict inequality \eqref{concavity2} holds at all points of $I$ other than $(0,\dots,0)$ and $(n_1,\dots,n_2)$, so Lemma \ref{individualexistence} applies to give an equitable $\pi\in\mathcal{P}$.

Necessity also follows from Lemma \ref{individualexistence}, since \eqref{typecondition1} is a special case of \eqref{concavity1}.
\end{proof}

\subsection{Computational details}

Motivated by the proof of Theorem \ref{thm:existence}, we calculate equitable participation rates by successively raising those $\pi_i$ for which $F_i(\pi)<1-\epsilon$. 
We choose a relaxation rate $\eta<1$, and cycle through the $i$, raising $\pi_i$ by $\eta$ as often as possible while preserving the above criterion. Then repeat the process this time raising by $\eta^2$, then repeat again raising by $\eta^3$, etc. 
For $K=3$ this appears to converge to 5 digits accuracy, using fewer than 100 evaluations of the vector $F(\pi)$. 

Each equitable $(20,40,20)$ entry of Table \ref{table04} took approximately 4 hours running under R, and involved 100 evaluations of the vector $F(\pi)$. Each evaluation required computing two integrals, using Simpson's rule with 400 time steps. Each time step in turn called for a binomial sum of $n_1n_2n_3$ terms. With $K=2$ the computations are faster (25 integrations, with each time step calling for summing $n_1n_2$ terms); for example the equitable $(500, 500)$ entries in Table \ref{table03} each ran in just over 3 hours, despite accounting for an order of magnitude more people. Note that our code could easily be optimized to run faster, eg. by using a more efficient integration method, or by starting with coarser time steps and refining them as the iteration proceeds. For utility computations, we do seem to need the level of accuracy provided by 400 time steps, but the $\pi$'s are less sensitive and could have been found quicker.

\subsection{Proof of Proposition \ref{prop:PAFandtontine}}
Let $v(t,k,w)$ denote the utility an individual derives from a PAF. Scale invariance shows that $e(t,k,w)=\eta(t,k)w$ for some function $\eta$. When $\gamma=1$, it also shows that $v(t,k,w)=a_{x+t}\log w +v(t,k,1)$. 
In the case $\gamma\neq 1$, Stamos (2008) derives an HJB equation for an individual's utility $v(t,k,w)$. The same argument applies when $\gamma=1$ and shows (in our notation) that 
\begin{multline*}
v_t(t,k,w)+\lambda_{x+t}(k-1)[v(t,k-1,w\frac{k}{k-1})-v(t,k,w)]-(r+\lambda_{x+t})v(t,k,w) + \\
+\sup_\eta \Big[\log(\eta w)+wv_w(t,k,w)(r-\eta)\Big]=0
\end{multline*}
for $k\ge 2$ (with a similar equation when $k=1$, except without the second term. Optimizing over $\eta$, and substituting our expression for $v$ gives that $a_{x+t}=wv_w(t,k,w)=1/\eta$. Therefore $\overline{e}(t,k,\overline w)=\frac{\overline{w}}{na_{x+t}}$, which is independent of $k$. Therefore $\overline{W}_t$ evolves according to the differential equation
$$
d\overline{W}_t=\Big[r-\frac{1}{a_{x+t}}\Big]\overline{W}_t\,dt.
$$
It is easily checked that the solution (with initial condition $\overline{W}_0=n$) is $\overline{W}_t=\frac{na_{x+t}\cdot{}_tp_x}{a_x}$, from which we conclude that 
$\overline{e}(t,N_t,\overline{W}_t)=\frac{1}{a_x}{}_tp_x$, as required. 

\end{document}